\newtheorem{theorem}{Theorem}
\newtheorem{lemma}{Lemma}
\newtheorem{corollary}{Corollary}
\newtheorem{proof}{Proof}
\newtheorem{proposition}{Proposition}
\newtheorem{property}{Property}
\begin{document}

\title{Physical Layer Security in Heterogeneous\\
Cellular Networks}
\author{Hui-Ming~Wang,~\IEEEmembership{Member,~IEEE,~}
        Tong-Xing~Zheng,~\IEEEmembership{Student Member,~IEEE}\\
        Jinhong~Yuan,~\IEEEmembership{Fellow,~IEEE,~}
        Don~Towsley,~\IEEEmembership{Fellow,~IEEE,~}
        and~Moon~Ho~Lee
\thanks{H.-M. Wang and T.-X. Zheng are with the School of Electronic and Information Engineering, and also with the MOE Key Lab for Intelligent Networks and Network Security,
Xi'an Jiaotong University, Xi'an, 710049, Shaanxi, China. Email: {\tt xjbswhm@gmail.com},
{\tt txzheng@stu.xjtu.edu.cn}.
}
\thanks{J. Yuan is with the School of Electrical Engineering and Telecommunications, University of New South Wales, Sydney, Australia. Email: {\tt j.yuan@unsw.edu.au}.}
\thanks{D. Towsley is with the Department of Computer Science, University of Massachusetts, Amherst, MA, US. Email: {\tt towsley@cs.umass.edu}.}
\thanks{M. H. Lee is with the Division of Electronics Engineering, Chonbuk National 	University, Jeonju 561-756, Korea. Email: {\tt moonho@jbnu.ac.kr}.}
}

\maketitle
\vspace{-0.8 cm}

\begin{abstract}
    The heterogeneous cellular network (HCN) is a promising approach to the deployment of 5G cellular networks.
    This paper comprehensively studies physical layer security in a multi-tier HCN where base stations (BSs), authorized users and eavesdroppers are all randomly located.
    We first propose an access threshold based secrecy mobile association policy that associates each user with the BS providing the maximum \emph{truncated average received signal power} beyond a threshold.
    Under the proposed policy, we investigate the connection probability and secrecy probability of a randomly located user, and provide tractable expressions for the two metrics.
    Asymptotic analysis reveals that setting a larger access threshold increases the connection probability while decreases the secrecy probability.
    We further evaluate the network-wide secrecy throughput and the minimum secrecy throughput per user with both connection and secrecy probability constraints.
    We show that introducing a properly chosen access threshold significantly enhances the secrecy throughput performance of a HCN.
\end{abstract}

\begin{IEEEkeywords}
    Physical layer security, heterogeneous cellular network, multi-antenna, artificial noise, secrecy throughput, stochastic geometry.
\end{IEEEkeywords}

\IEEEpeerreviewmaketitle

\section{Introduction}

\IEEEPARstart{T}{he} deployment of heterogeneous cellular networks (HCNs) is a promising approach to providing seamless wireless coverage and high network throughput in 5G mobile communication.
A HCN deploys a variety of infrastructure, such as macro, pico, and  femto base stations (BSs), as well as fixed relay  stations in different tiers \cite{Lagrange1997Multitier}.
BSs in different tiers have different transmit powers and coverages.
For example, a macrocell uses the highest power to provide large coverage, while a femtocell is usually a low-power home BS intended for short-range communications.
Due to the co-channel spectrum sharing between different tiers,  network interference in the HCN is much more severe than that in a conventional single-tier cellular network, thus posing a challenge to the successful co-existence of tiers \cite{Xia2010Open}.
Therefore, one of the major challenges in deploying HCNs is to efficiently manage network interference.
Femtocell access control, using either closed or open access, is an important mechanism for interference management \cite{Roche2010Access}.
In closed access, femtocell access points provide service only to the specified subscribers, whereas arbitrary nearby users can use the femtocell in open access.
Xia \emph{et al.} \cite{Xia2010Open} point out that open access is preferred by network operators, since it not only efficiently reduces cross-tier interference, but also provides an inexpensive way to expand network capacity.

However, due to the open system architecture of a HCN and the broadcast nature of wireless communications, information transmissions intended for authorized user equipments (UEs) are more vulnerable to eavesdroppers (also named unauthorized users).
As shown in Fig. \ref{HCN}, eavesdroppers (Eves) find it easy to overhear legitimate communications.
Therefore, secure transmission is a significant concern when designing HCNs.
Unfortunately, the existing literature on HCNs has mainly focused on network throughput and energy efficiency; little of it has involved security issues.
\begin{figure}[!t]
\centering
\includegraphics[width=3.0in]{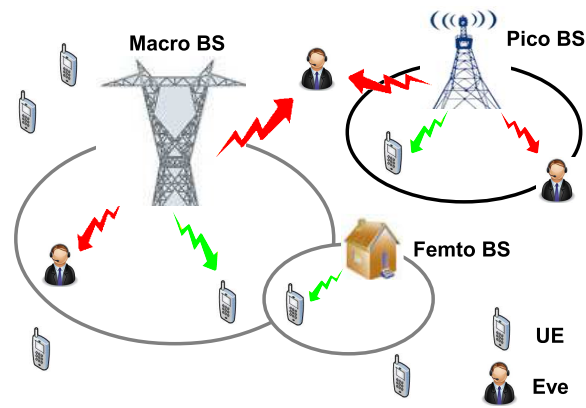}
\caption{A 3-tier macro/pico/femto HCN where authorized users coexist with eavesdroppers.}
\label{HCN}
\end{figure}

Physical layer security (PLS), or, information-theoretic security, has drawn ever-increasing attention since Wyner's seminal research \cite{Wyner1975Wire-tap}, where he introduced the degraded wiretap channel model and defined the concept of \emph{secrecy capacity}.
During the past decades, the wiretap channel model has been generalized to multi-input multi-output (MIMO) channels \cite{Shafiee2009Towards}-\cite{Khisti2010Secure2}, cooperative relay channels \cite{Dong2010Improving}-\cite{Zheng2015Outage}, and two-way channels \cite{Tekin2008General}-\cite{Wang2012Hybrid}, {etc}.
A large number of secrecy transmission techniques
and schemes have been proposed for wireless communications (refer to \cite{Mukherjee2014Principles} and its references).

\subsection{Related Works and Motivation}
Early research on PLS has focused on point-to-point links or single-cell scenarios.
In some works, Eve's channel state information (CSI) is assumed to be perfectly known at the transmitter, which is clearly not practical in real wiretap scenarios, since Eves are usually passive.
Without Eve's CSI, Goel \emph{et al.} \cite{Goel2008Guaranteeing} proposed a multi-antenna transmission strategy with artificial noise embedded into information signals to confuse Eve.
This method has become a popular approach to enhance the PLS, and has attracted a stream of research, e.g., \cite{Zhou2010Secure}-\cite{WangANAFF}.
The idea of artificial noise has also been extended to relay systems with jammers, in which cooperative jamming techniques \cite{Wang2012Hybrid}, \cite{Wang2015Hybrid}, \cite{Deng2015Secrecy} have been proposed to improve PLS.
However, due to dynamic and large-scale wireless network topologies, the spatial positions of network nodes and propagation path losses become very critical factors influencing secrecy performance, which unfortunately has been considered by none of the above endeavors.

Recently, stochastic geometry theory has provided a powerful tool to study the average behavior of a network by modeling the positions of network nodes according to a spatial distribution such as a Poisson point process (PPP) \cite{Haenggi2009Stochastic}.
Under a stochastic geometry framework, authors in \cite{Zhou2011Secure}-\cite{Zheng2015Multi} studied the secure multi-antenna transmission against PPP distributed Eves.
More specifically, Zhou \emph{et al.} \cite{Zhou2011Secure} evaluated the secure connectivity of two multi-antenna transmission techniques: a directional antenna scheme and an eigen-beamforming scheme.
Zheng \emph{et al.} \cite{Zheng2014Transmission} investigated the average secrecy outage probability in a multi-input single-output (MISO) wiretap channel for both non-colluding and colluding Eves.
Ghogho \emph{et al.} \cite{Ghogho2011Physical} derived the probability of a positive secrecy rate achieved by the artificial-noise method in MIMO channels.
Zheng \emph{et al.} \cite{Zheng2015Multi}, \cite{Zheng2015Multi2} proposed both dynamic and static parameter design schemes for the artificial-noise-aided transmission to maximize secrecy throughput subject to a secrecy outage probability constraint.

Research on PLS has been further extended to ad hoc networks \cite{Zhou2011Throughput}, \cite{Zhang2013Enhancing} and cellular networks \cite{Wang2013Physical}, \cite{Geraci2014Physical}, where the placement of transmitters and receivers are both modeled as PPPs.
Zhou, \emph{et al.} \cite{Zhou2011Throughput}, \cite{Zhang2013Enhancing} considered single- and multi-antenna transmissions in an ad hoc network, and  provided a tradeoff analysis between connectivity and secrecy, and further measured the secrecy transmission capacity.
Wang \emph{et al.} \cite{Wang2013Physical} evaluated the secrecy performance of cellular networks considering the cell association and information exchange between BSs, and provided tractable results for the achievable secrecy rate under different assumptions on the information of Eves' locations.
However, they only considered a single-antenna case, ignoring both small-scale fading and inter-cell interference.
This work has been extended by \cite{Geraci2014Physical}, where Geraci \emph{et al.} investigated the average secrecy rate utilizing the regularized channel inversion transmit precoding from a perspective of massive MIMO systems.

Due to the multi-tier hierarchical architecture, HCNs bring new challenges to the investigation of PLS compared with the conventional single-tier topology \cite{Yang2015Safeguarding}.
In addition to cross-cell interference, HCNs introduce severe cross-tier interference.
Both reliability and secrecy of data transmissions should be taken into account, which makes analyzing the impact of interference on both UEs and Eves much more complicated, especially when system parameters differ between different tiers.
Besides, mobile terminals can access an arbitrary tier, e.g., open access, which calls for specific mobile association policies that consider both quality of service (QoS) and secrecy.

A very recent contribution \cite{Lv2015Secrecy} considered PLS in a two-tier heterogeneous network with one Eve wiretapping macrocell users.
We point out that, the authors in \cite{Lv2015Secrecy} focused on the design/optimization of secrecy beamforming, but not from the perspective of network analysis and deployment.
Their conclusions are based on the idealized assumption that the CSI of the Eve is perfectly available.
Moreover, they considered neither the multi-Eve wiretap scenarios, nor the random spatial positions of network nodes and the large-scale path loss.
To the best of our knowledge, no prior work has accounted for PLS when designing HCNs, and a fundamental analysis framework to evaluate the secrecy performance in HCNs is lacking, which has motivated our work.

\subsection{Our Work and Contributions}
In this paper, we extend PLS to a $K$-tier HCN where the positions of BSs, UEs and Eves are all modeled as independent homogeneous PPPs.
We provide a comprehensive performance analysis of artificial-noise-aided multi-antenna secure transmission under a stochastic geometry framework.
Our main contributions are summarized as follows:

\textbf{i})
We propose a secrecy mobile association policy based on the \emph{truncated average received signal power} (ARSP).
Specifically, a typical UE is only permitted to associate with the BS providing the highest ARSP; if the highest ARSP is below a pre-set access threshold, the UE remains inactive.
 We derive closed-form expressions for the tier association probability for this policy (the probability that a tier is associated with the typical UE) and the BS activation probability (the probability that a BS associates at least one UE), which are essential to analyze the key performance metrics.

\textbf{ii})
We analyze the connection probability of a randomly located UE, which is defined as the probability that the signal-to-interference-plus-noise ratio (SINR) of the UE lies above a target SINR.
We derive a new accurate integral representation of the connection probability and an analytically tractable expression under the interference-limited case.
An asymptotic analysis of the connection probability reveals that setting a larger access threshold is beneficial for improving link quality.

\textbf{iii})
We analyze the user secrecy probability, which is defined as the probability that the SINR of an arbitrary Eve lies below a SINR threshold.
We derive analytical upper and lower bounds for the secrecy probability, which are close to the exact values in the high secrecy probability region.
We find that the access threshold, BS density and power allocation ratio respectively displays a tradeoff between the connection and secrecy probabilities, and that these parameters should be carefully designed to balance link quality and secrecy.

\textbf{iv})
We investigate network-wide secrecy throughput subject to connection and secrecy probability constraints.
We derive closed-form expressions for the rate of redundant information in small-antenna and large-antenna cases, respectively.
We further evaluate the minimum per user secrecy throughput.
We show that, compared with non-threshold mobile access, our threshold-based policy can significantly increase secrecy throughout when the access threshold is properly chosen.

Leveraging the obtained analytical expressions, we provide various tractable predictions of network performance and guidelines for future network designs.
For instance, setting a larger access threshold helps to improve link quality.
However, if we aim to increase network-wide secrecy throughput, we should properly choose the access threshold, but not set it as large as possible.

\subsection{Organization and Notations}
The remainder of this paper is organized as follows.
In Section II, we describe the system model.
In Sections III and IV, we investigate the connection probability and secrecy probability, respectively.
In Section V, we evaluate the network-wide secrecy throughput.
In Section VI, we conclude our work.

\emph{Notations}:
bold uppercase (lowercase) letters denote matrices (column vectors).
$(\cdot)^{\dag}$, $(\cdot)^{\mathrm{T}}$, $|\cdot|$, $\|\cdot\|$, $\mathbb{P}\{\cdot\}$, and $\mathbb{E}_A(\cdot)$ denote conjugate, transpose, absolute value, Euclidean norm, probability, and expectation with respect to (w.r.t.) $A$, respectively.
$\mathcal{CN}(\mu, \sigma^2)$, $\mathrm{Exp}(\lambda)$ and $\Gamma(N,\lambda)$ denote the circularly symmetric complex Gaussian distribution with mean $\mu$ and variance $\sigma^2$, exponential distribution with parameter $\lambda$, and gamma distribution with parameters $N$ and $\lambda$, respectively.
$\mathbb{R}^{m\times n}$ and $\mathbb{C}^{m\times n}$ denote the $m\times n$ real and complex number domains, respectively.
$\log(\cdot)$ and $\ln(\cdot)$ denote the base-2 and natural logarithms, respectively.
$f_V(\cdot)$ and $F_V(\cdot)$ denote the probability density function (PDF) and cumulative distribution function (CDF) of a random variable $V$, respectively.
$\mathcal{B}(o,r)$ describes a disk with center $o$ and radius $r$.
$\textsf{B}_{z,k}$ and $\textsf{U}_{x}$ ($\textsf{E}_{x}$) represent a BS at location $z$ in tier $k$ and a UE (Eve) at location $x$, respectively.
$[x]^{+}\triangleq \max(x,0)$ with $x$ a real number.
 $C_{\alpha,m}\triangleq \frac{\Gamma\left(m-1+{\frac{2}{\alpha}}\right)
\Gamma\left(1-{\frac{2}{\alpha}}\right)}
{\Gamma\left(m-1\right)}$ for $m\ge 2$.


\section{System Model}
We consider a $K$-tier HCN where the BSs in different tiers have different operating parameters (e.g., transmit power and antenna numbers), while those in the same tier share the same parameters.
Define $\mathcal{K}\triangleq \{1,2,\cdots,K\}$.
In tier $k$, the BSs are spatially distributed according to a homogeneous PPP $\Phi_k$ with density $\lambda_k$ in a two-dimensional plane $\mathbb{R}^2$.
As depicted in Fig. \ref{HCN}, there coexist UEs and Eves, where the UEs are legitimate destinations while the Eves are wiretappers attempting to intercept the secret information intended for the UEs.
The locations of the UEs and Eves are characterized by two independent homogeneous PPPs $\Phi_u$ and $\Phi_e$ with densities $\lambda_u$ and $\lambda_e$, respectively.

\subsection{Channel Model}
Wireless channels in the HCN are assumed to undergo flat Rayleigh fading together with a large-scale path loss governed by the exponent $\alpha>2$ \footnote{The analysis of different $\alpha$'s in different tiers can be performed in a similar way, which is omitted in this paper for tractability. }.
Each BS in tier $k$ has $M_k$ antennas, and UEs and Eves are each equipped with a single antenna.
 The channel from $\textsf{B}_{z,k}$ to $\textsf{U}_{x}$ or $\textsf{E}_{x}$ is characterized by $\mathbf{h}_{zx}r_{zx}^{-\frac{\alpha}{2}}$, where $\mathbf{h}_{zx}\in\mathbb{C}^{M_k\times 1}$ denotes the small-scale fading vector with independent and identically distributed (i.i.d.) entries $h_{zx,j}\thicksim \mathcal{CN}(0,1)$, and $r_{zx}$ denotes the path distance.
The noise at each receive node is $n_x\sim\mathcal{CN} (0, N_0)$.
We assume that each BS knows the CSIs of its associated UEs.
Since each Eve passively receives signals, its CSI is unknown, whereas its channel statistics information is available\footnote{This assumption is very generic and has been extensively adopted in the literature on PLS, e.g., \cite{Goel2008Guaranteeing}-\cite{Zhang2013Design}, \cite{Zhou2011Secure}-\cite{Geraci2014Physical}.}.

\subsection{Wyner's Wiretap Code}
We utilize the well-known Wyner's wiretap encoding scheme \cite{Wyner1975Wire-tap} to encode secret information.
Let $\mathcal{R}_{t,k}$ and $\mathcal{R}_{e,k}$ denote respectively the rates of the transmitted codewords and redundant information (to protect from eavesdropping) for tier $k$, and $\mathcal{R}_{s,k}=\mathcal{R}_{t,k}-\mathcal{R}_{e,k}$ denotes the secrecy rate.
Consider a typical legitimate BS-UE pair in tier $k$. 
If the channel from the BS to the UE can support the rate $\mathcal{R}_{t,k}$,
the UE is able to decode the secret messages, which corresponds to a \emph{reliable connection} event.
If none of the channels from the BS to the Eves can support the redundant rate $\mathcal{R}_{e,k}$, the information is deemed to be protected against wiretapping, i.e., \emph{secrecy} is achieved \cite{Zhang2013Enhancing}.

\subsection{Artificial-Noise-Aided Transmission}
To deliberately confuse Eves while guaranteeing reliable links to UEs, each BS employs the artificial-noise-aided transmission strategy \cite{Goel2008Guaranteeing}.
The transmitted signal of $\textsf{B}_{z,k}$ is designed in the form of
\begin{equation}\label{x_1j}
  \mathbf{x}_{z}=\sqrt{\phi_k P_k}\mathbf{w}_{z}s_{z}+\sqrt{{(1-\phi_k)P_k}}
  \mathbf{W}_{z}\mathbf{v}_{z},~z\in\Phi_k,
\end{equation}
where $s_z$ is the information-bearing signal with $\mathbb{E}[|s_z|^2]=1$, $\mathbf{v}_{z}\in\mathbb{C}^{(M_k-1)\times1}$ is an artificial noise vector with i.i.d. entries ${v}_{z,i}\sim\mathcal{CN}\left({0},
\frac{1}{M_k-1}\right)$, and $\phi_k\in[0,1]$ denotes the power allocation ratio of the information signal power to the total transmit power $P_k$.
$\mathbf{w}_{z}={{\mathbf{h}}_{z}^{\dag}}
/{\|{\mathbf{h}}_{z}\|}$ is the beamforming vector for the served UE, with $\mathbf{h}_{z}$ the corresponding channel.
$\mathbf{W}_{z}\in\mathbb{C}^{M_k\times (M_k-1)}$ is a weight matrix for the artificial noise, and the columns of $\mathbf{W}\triangleq [\mathbf{w}_{z} ~\mathbf{W}_{z}]$ constitute an orthogonal basis.

\subsection{Secrecy Mobile Association Policy}
\begin{figure}[!t]
\centering
\includegraphics[width=3.0in]{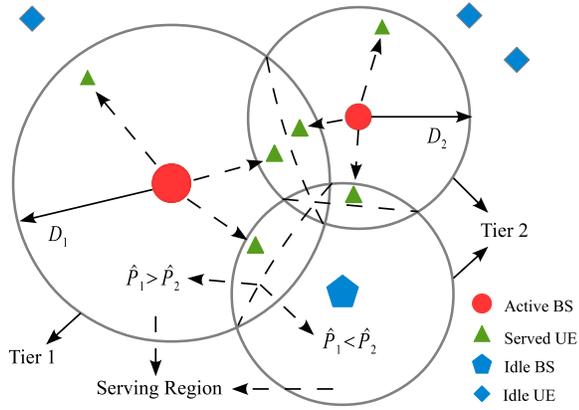}
\caption{An illustration of our mobile association policy in a 2-tier HCN.
A UE connects to the BS providing the highest ARSP instead of the nearest BS.
Those UEs outside the serving regions of BSs can not be served. A BS remains idle if it has no UE to serve.}
\label{Serving Region}
\end{figure}

We consider an open-access system where each UE is allowed to be served by a BS from an arbitrary tier, and it is associated with the tier that provides the largest ARSP.
For an arbitrary UE, the ARSP related to tier $k$ is defined as $\hat{P}_k \triangleq {P_k M_k}{r_k^{-\alpha}}$, where $r_k$ denotes the distance from the UE to the nearest BS in tier $k$.
To avoid access with too low an ARSP, we propose a \emph{truncated ARSP} based mobile association policy, where we introduce an \emph{access threshold} $\tau$, and only allow those UEs with ARSPs larger than $\tau$ to be associated.
Mathematically, the truncated ARSP is defined as
\begin{align}\label{arsp}
\hat{P}_k=\begin{cases}
~{P_k M_k}{r_k^{-\alpha}}, & r_k\le D_k,\\
~0, & r_k>D_k,
\end{cases}
\end{align}
where $D_k=\left(\frac{P_kM_k}{\tau}\right)
^{\frac{1}{\alpha}}$ denotes the radius of the \emph{serving region} of an arbitrary BS in tier $k$, and the index of the tier to which the considered UE is associated is determined by
\begin{align}\label{bs_ass}
  n^* = \arg\max_{k\in\mathcal{K}
  }\hat{P}_k.
\end{align}

Our mobile association policy is illustrated in Fig. \ref{Serving Region}, where the serving region of a BS has been clearly shown.
Due to tiers having different operating parameters, the average coverage regions of each cell do not correspond to a standard Voronoi tessellation, but closely resemble a \emph{circular Dirichlet tessellation} \cite{Dhillon2012Modeling}.
It is worth mentioning that compared to conventional non-threshold association policies,
our threshold-based policy benefits secrecy in the following two aspects:

1) It restrains a BS from associating the UEs outside its serving region (the ARSP outside the serving region is always inferior to that inside), such that not only a good link quality can be guaranteed but also more power can be used to transmit artificial noise to degrade the wiretap channels.

2) If a BS serves no UE, it is kept idle in order to reduce both intra- and cross-tier interference.
 Therefore, the link quality for the active BSs consequently improves, which has the potential of increasing secrecy rates or secrecy throughput.

The proposed mobile association policy is quite applicable to the HCN with secrecy requirements.
We will see in subsequent analysis that $\tau$ plays a critical role in secrecy transmissions.
Before going into the analysis, we first define and compute \emph{tier association probability} \cite{Jo2012Heterogeneous} and \emph{BS activation probability}, which are essential for analyzing our key performance metrics in the sequel.

For ease of notation, we define $\delta\triangleq {2}/{\alpha}$, $\Xi\triangleq \sum_{j\in\mathcal{K}}\lambda_j(P_jM_j)^{{\delta}}$, and
$\mathcal{C}_{j,k} \triangleq \frac{\mathcal{C}_j}{\mathcal{C}_k}$, $\forall \mathcal{C}\in\{P,M,\lambda,\phi\}$.

Recalling \eqref{bs_ass}, the association probability of tier $k$ is mathematically defined as
\begin{equation}\label{sk_def}
  \mathcal{S}_k\triangleq\mathbb{P}\{n^*=k\}=\mathbb{P}\{ \hat{P}_k>\hat{P}_j, \forall j\in\mathcal{K}\setminus k\}.
\end{equation}
It has a closed-form expression provided by the following lemma.
\begin{lemma}\label{association_probability_lemma}
\textit{
The association probability of tier $k$ is given by
\begin{equation}\label{S_k}
  \mathcal{S}_k = \lambda_k(P_kM_k)^{{\delta}} \Xi^{-1}
    \left(1-e^{-\pi\tau^{-{\delta}}\Xi}\right).
\end{equation}}
\end{lemma}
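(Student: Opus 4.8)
The plan is to condition on the distance $r_k$ from the typical UE to its nearest BS in tier $k$, use the mutual independence of the $K$ tier processes, and reduce each competing event $\{\hat{P}_j<\hat{P}_k\}$ to a simple void‑probability statement for $\Phi_j$. First I would recall that, since $\Phi_k$ is a homogeneous PPP of density $\lambda_k$ on $\mathbb{R}^2$, the nearest‑BS distance has PDF $f_{r_k}(r)=2\pi\lambda_k r\,e^{-\pi\lambda_k r^2}$ for $r\ge 0$. Because the truncated ARSP in \eqref{arsp} satisfies $\hat{P}_k>0$ if and only if $r_k\le D_k$ (and, since all $\hat{P}_j\ge 0$, the strict inequalities in \eqref{sk_def} already force $\hat P_k>0$), the event $\{n^*=k\}$ confines $r_k$ to $(0,D_k]$, so all integrations will be taken over this interval only; ties between tiers occur with probability zero and are harmless.

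Next, fix $r_k=r\le D_k$ and look at a competing tier $j\neq k$. I would show that $\{\hat{P}_j<\hat{P}_k\}$ is exactly the event that $\Phi_j$ has no point inside $\mathcal{B}(o,\rho_j(r))$, where $\rho_j(r)\triangleq (P_{j,k}M_{j,k})^{1/\alpha}r$ is the distance at which $P_jM_j\rho_j(r)^{-\alpha}=P_kM_k r^{-\alpha}$. Indeed, if $r_j>D_j$ then $\hat{P}_j=0<\hat{P}_k$; and if $r_j\le D_j$ then $\hat{P}_j=P_jM_jr_j^{-\alpha}<\hat{P}_k$ precisely when $r_j>\rho_j(r)$. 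The one point that needs care — and the only place the truncation really intervenes — is that these two sub‑cases patch together into the single half‑line $r_j>\rho_j(r)$: this requires $\rho_j(r)\le D_j$, which holds because $r\le D_k$ gives $P_jM_j\rho_j(r)^{-\alpha}=P_kM_k r^{-\alpha}\ge\tau=P_jM_jD_j^{-\alpha}$. I expect this bookkeeping around the serving radii to be the main (though mild) obstacle; the rest is a routine computation.

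Finally, by the void probability of a PPP, $\mathbb{P}\{\Phi_j\cap\mathcal{B}(o,\rho_j(r))=\emptyset\}=e^{-\pi\lambda_j\rho_j(r)^2}$ with $\rho_j(r)^2=(P_{j,k}M_{j,k})^{\delta}r^2$. Since $\Phi_1,\dots,\Phi_K$ are independent, conditioning on $r_k=r$ and multiplying over $j\ne k$ yields
\[
\mathcal{S}_k=\int_0^{D_k} 2\pi\lambda_k r\,\exp\!\Bigl(-\pi\lambda_k r^2-\pi r^2 (P_kM_k)^{-\delta}\!\!\sum_{j\ne k}\!\lambda_j(P_jM_j)^{\delta}\Bigr)\,dr.
\]
Writing $\lambda_k=\lambda_k(P_kM_k)^{\delta}(P_kM_k)^{-\delta}$ absorbs the $\lambda_k r^2$ term into the sum as its $j=k$ summand, collapsing the exponent to $-\pi r^2(P_kM_k)^{-\delta}\Xi$. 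The remaining integral $\int_0^{D_k}2\pi\lambda_k r\,e^{-ar^2}\,dr$ with $a\triangleq \pi(P_kM_k)^{-\delta}\Xi$ equals $(\pi\lambda_k/a)\bigl(1-e^{-aD_k^2}\bigr)$; substituting $D_k^2=(P_kM_k)^{\delta}\tau^{-\delta}$ gives $aD_k^2=\pi\tau^{-\delta}\Xi$ and $\pi\lambda_k/a=\lambda_k(P_kM_k)^{\delta}\Xi^{-1}$, which is exactly \eqref{S_k}.
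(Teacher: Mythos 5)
Your proposal is correct and follows essentially the same route as the paper's Appendix A proof: condition on the nearest-BS distance $r_k\le D_k$ with PDF $2\pi\lambda_k r e^{-\pi\lambda_k r^2}$, reduce each event $\{\hat P_j<\hat P_k\}$ to the void probability of $\Phi_j$ on $\mathcal{B}(o,(P_{j,k}M_{j,k})^{1/\alpha}r)$ (your observation that $\rho_j(r)\le D_j$ whenever $r\le D_k$ is exactly the paper's step (c)--(d) with the $\min$), and evaluate the resulting Gaussian-type integral. No gaps.
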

\begin{proof}
    Please see Appendix \ref{association_probability_proof}.
\end{proof}

From Lemma \ref{association_probability_lemma}, we make the following three observations:

1) Tiers with large BS densities, high transmit power, and more BS antennas are more likely to have UEs associated with them.
When tier $k$ has a much larger $\lambda_k$, $P_k$, or $M_k$ than other tiers, $\mathcal{S}_k$ can be approximated by $1-e^{-\pi D_k^2\lambda_k}$, and converges to one as $\lambda_k$ ($P_k$ or $M_k$) goes to infinity.

2) Due to the restriction of $\tau$, an arbitrary UE has a probability
$\mathcal{S}=\sum_{k\in\mathcal{K}}\mathcal{S}_k
=1-e^{-\pi\tau^{-{\delta}}\Xi}$ of being associated with a BS, which implies it has probability $e^{-\pi\tau^{-{\delta}}\Xi}$ of being idle.
This differs from a non-threshold mobile association policy \cite{Jo2012Heterogeneous} which always associates a UE with a tier.

3) Each BS can associate with multiple UEs, and the average number of UEs per BS in tier $k$, i.e., cell load, is $\mathcal{N}_k$ = $\frac{\lambda_u}{\lambda_k}\mathcal{S}_k$ (see \cite[Lemma 2]{Jo2012Heterogeneous}).

We assume that a BS utilizes time division multiple access (TDMA) to efficiently eliminate intra-cell interference.
Due to the overlap of serving regions among cells and different biases towards admitting UEs, even if a BS has UEs located within its serving region, it is inactive when all these UEs are associated with the other BSs (see the idle BS in Fig. \ref{Serving Region}).
We define the \emph{BS activation probability} of tier $k$ as
\begin{equation}\label{A_k_def}
  \mathcal{A}_k \triangleq \mathbb{P}\{\textit{A BS in tier } k \textit{ associates with at least one UE} \}.
\end{equation}
It has a closed-form expression given in the following lemma.
\begin{lemma}\label{activation_probability_lemma}
\textit{The BS activation probability of tier $k$ is given by
\begin{equation}\label{A_k}
  \mathcal{A}_k = 1-\exp\left(-\lambda_u(P_kM_k)^{{\delta}}
  \Xi^{-1}\left(1-e^{-\pi\tau^{-{\delta}}\Xi}\right)\right).
\end{equation}}
\end{lemma}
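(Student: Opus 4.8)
The plan is to pass to the complementary event: a tier-$k$ BS is idle exactly when no UE is associated with it, so $\mathcal{A}_k=1-\mathbb{P}\{N_k=0\}$, where $N_k$ denotes the number of UEs served by a typical tier-$k$ BS. First I would identify which UEs this BS can serve. Under \eqref{arsp}--\eqref{bs_ass}, a UE is served by a given tier-$k$ BS if and only if (i) the UE selects tier $k$, i.e. $n^*=k$, and (ii) this BS is the nearest one among all BSs of $\Phi_k$ (within a tier the truncated ARSP is a decreasing function of the distance, so it is maximized at the closest BS). Event (i) has probability $\mathcal{S}_k$ by Lemma \ref{association_probability_lemma}; marking each point of $\Phi_u$ independently with this probability and invoking the thinning theorem, the tier-$k$-associated UEs form, under the usual independent-thinning approximation, a homogeneous PPP $\Phi_u^{(k)}$ of intensity $\lambda_u\mathcal{S}_k$.

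Because each UE of $\Phi_u^{(k)}$ connects to its nearest tier-$k$ BS, $N_k$ equals the number of points of $\Phi_u^{(k)}$ lying in the Voronoi cell (with respect to $\Phi_k$) of the typical tier-$k$ BS. Since the mean area of a typical Poisson--Voronoi cell of intensity $\lambda_k$ is $1/\lambda_k$, the mean load is $\mathbb{E}[N_k]=\lambda_u\mathcal{S}_k/\lambda_k=\mathcal{N}_k$. Following the standard argument in \cite[Lemma 2]{Jo2012Heterogeneous}, I would model $N_k$ as a Poisson random variable with mean $\mathcal{N}_k$, so that $\mathbb{P}\{N_k=0\}=e^{-\mathcal{N}_k}$.

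It then remains to substitute $\mathcal{S}_k$ from \eqref{S_k} into $\mathcal{N}_k=\frac{\lambda_u}{\lambda_k}\mathcal{S}_k$, which gives $\mathcal{N}_k=\lambda_u(P_kM_k)^{\delta}\Xi^{-1}\bigl(1-e^{-\pi\tau^{-\delta}\Xi}\bigr)$, and hence $\mathcal{A}_k=1-e^{-\mathcal{N}_k}$ is precisely \eqref{A_k}.

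The difficulty here is not computational but lies in the modeling step: the number of points of a PPP falling in a typical Poisson--Voronoi cell has no closed-form distribution (the cell-area law is itself known only approximately), so the Poisson approximation for $N_k$ is the one genuinely nontrivial ingredient. I would justify it by noting that it is widely adopted and accurate in the stochastic-geometry literature on cellular load, and that it is consistent with the cell-load formula $\mathcal{N}_k$ already invoked in the discussion following Lemma \ref{association_probability_lemma}; the residual correlation between a UE's tier selection and its position relative to $\Phi_k$ is likewise absorbed into the same standard independent-thinning approximation.
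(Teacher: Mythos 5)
Your proof arrives at the correct formula, but by a genuinely different route from the paper. The paper's proof (Appendix~B) tags a BS $\textsf{B}_{z,k}$, restricts attention to the UEs of $\Phi_u$ falling in its serving region $\mathcal{B}(z,D_k)$, computes for a UE at distance $r$ the probability $e^{-\pi\Xi(P_kM_k)^{-\delta}r^2}$ that no BS in any tier beats the tagged BS's ARSP (reusing \eqref{Pr_kj} from the proof of Lemma~\ref{association_probability_lemma}), and then applies the PGFL over $\Phi_u$ to get the void probability \eqref{A_k_inverse}; the exponential form is thus produced by the Poissonianity of the \emph{UE} process, and the integral $2\pi\lambda_u\int_0^{D_k}e^{-\pi\Xi(P_kM_k)^{-\delta}r^2}r\,dr=\mathcal{N}_k$ appears only at the last step. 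You instead postulate that the load $N_k$ of the typical tier-$k$ BS is Poisson with mean $\mathcal{N}_k=\frac{\lambda_u}{\lambda_k}\mathcal{S}_k$ and read off $\mathcal{A}_k=1-e^{-\mathcal{N}_k}$, which the paper only states as a corollary \emph{after} the lemma. The two computations are numerically equivalent (PGFL with the per-UE mean inside is exactly $e^{-\mathbb{E}[N_k]}$), and neither is strictly exact: the paper silently pushes the expectation over the BS processes inside the product over UEs (the association events of different UEs are correlated through the common $\Phi_1,\dots,\Phi_K$), while you assume a Poisson cell load over a random Voronoi-type cell. Your version is shorter and makes the modeling assumption explicit and honest, which is a virtue; the paper's version buys a derivation in which the exponential is earned from the PGFL rather than assumed, and which reuses the intermediate quantity \eqref{Pr_kj} already established for \eqref{S_k}. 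One small refinement: the region you should count UEs in is not the full Voronoi cell of $\Phi_k$ but its intersection with $\mathcal{B}(z,D_k)$ (tier-$k$ association additionally requires $r_k\le D_k$); this does not change your mean-load identity, since $\mathcal{S}_k$ already encodes the truncation, but it is worth stating.
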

\begin{proof}
Please see Appendix \ref{activation_probability_proof}.
\end{proof}

    The BS activation probability is important for analyzing HCNs since the level of intra- and cross-tier interferences depends heavily on it.
    Although some empirical approximations of the BS activation probability have been given for conventional cellular networks, e.g., \cite{Li2014Throughput}, they do not apply to HCNs.
    Therefore, our derivation of the BS activation probability is essential for our analysis.

From Lemmas \ref{association_probability_lemma} and \ref{activation_probability_lemma}, we obtain $\mathcal{A}_k = 1-e^{-\frac{\lambda_u}{\lambda_k}\mathcal{S}_k}$.
Obviously, BSs with higher power and more antennas have higher activation probabilities.
Since $\frac{\lambda_u}{\lambda_k}\mathcal{S}_k$ monotonically decreases in $\lambda_k$ (see \eqref{S_k}), it is easy to prove that $\mathcal{A}_k$ is a monotonically decreasing function of  $\lambda_k$, which indicates that deploying more cells results in a smaller BS activation possibility.
On the contrary, introducing more UEs (a larger $\lambda_u$) increases $\mathcal{A}_k$.
We can also readily prove that both $\mathcal{S}_k$ and $\mathcal{A}_k$ decrease in $\tau$, just as validated in Fig. \ref{SA}.
\begin{figure}[!t]
\centering
\includegraphics[width=3.0in]{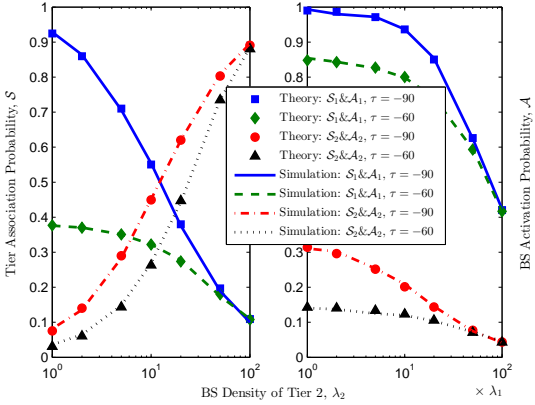}
\caption{Tier association probability and BS activation probability in a 2-tier HCN vs. $\lambda_2$ for different $\tau$(dBm)'s.}
\label{SA}
\end{figure}
The set of active BSs in tier $k$ is a thinning of $\Phi_k$, denoted by $\Phi_k^o$, with density $\lambda_k^{o}=\mathcal{A}_k\lambda_k$.
We have the following property w.r.t. the new $\lambda_k^{o}$.
\begin{property}\label{BS_density_property}
\textit{$\lambda_k^{o}$  monotonically increases in $\lambda_k$, and tends to $\lambda_u$ as $\lambda_k$ goes to infinity.}
\end{property}
\begin{proof}
    Pleas see Appendix \ref{BS_density_proof}.
\end{proof}

This property indicates that although deploying more cells results in smaller BS activation probabilities, it actually increases the number of active BSs.
We emphasize that many existing works, e.g., \cite{Dhillon2012Modeling}, \cite{Jo2012Heterogeneous}, implicitly assume that all BSs are activated, which is not appropriate  mathematically, since the total number of active BSs is limited by the number of UEs under a one-to-one service mode.
This inevitably leads to an inaccurate evaluation of network performance.
In this paper, the BS activation probability is taken into consideration, and $\lambda_k^o<\lambda_u$ strictly holds, which is more realistic compared with \cite{Dhillon2012Modeling}, \cite{Jo2012Heterogeneous}.

In the following sections, we provide a comprehensive analysis of PLS in HCNs incorporating the connection probability, secrecy probability, and network-wide secrecy throughput, respectively.
We stress that, due to secrecy considerations, the analysis is fundamentally different from the existing works without secrecy constraints.
By deriving various analytical expressions for our performance metrics, we aim to provide tractable predictions of network performance and guidelines for future network designs.

\section{User Connection Probability}
In this section, we investigate the connection probability of a randomly located UE.
Here connection probability corresponds to the probability that a secret message is decoded by this UE.

Without lose of generality, we consider a typical UE located at the origin $o$ and served by $\textsf{B}_{b,k}$.
In addition to the desired information signal from the serving BS $\textsf{B}_{b,k}$, $\textsf{U}_{o}$ receives interference consisting of both undesired information signals and artificial noise from the BSs 1) in tier $k$ (except for $\textsf{B}_{b,k}$), and 2) in all the other tiers.
The received signal at $\textsf{U}_{o}$ is given by
\begin{align}\label{yo}
    &y_{o}
    = \underbrace{\frac{\sqrt{\phi_k P_k}
    \mathbf{h}_{b}^\mathrm{T}\mathbf{w}_{b}s_{b}}
    {R_k^{\alpha/2}}}_{\mathrm{information~signal}}+\nonumber\\
    &\underbrace{\sum_{j\in\mathcal{K}}
    \sum_{z\in\Phi_j^o
    \setminus b} \frac{\sqrt{\phi_j P_j}\mathbf{h}_{zo}^{\mathrm{T}}
    \mathbf{w}_{z}s_{z}+\sqrt{(1-\phi_j) P_j}\mathbf{h}_{zo}^{\mathrm{T}}
    \mathbf{W}_{z}\mathbf{v}_{z}}{r_{zo}^{\alpha/2}}}
    _{\mathrm{intra- ~and ~ cross-tier~interference~ (signal~ and~ artificial~ noise)}}+n_{o},
\end{align}
where $R_k$ represents the distance between $\textsf{U}_{o}$ and $\textsf{B}_{b,k}$.

\subsection{General Result}
The connection probability of $\textsf{U}_{o}$ associated with tier $k$ is defined as the probability of the event that the instantaneous SINR of $\textsf{U}_{o}$ exceeds or equals a target SINR $\beta_{t}$, i.e.,
\begin{equation}\label{pc_k}
    \mathcal{P}_{c,k}\triangleq\mathbb{P}\{\mathbf{SINR}_{o,k}
    \ge\beta_{t}\},
\end{equation}
where $\mathbf{SINR}_{o,k}$ is given by
\begin{equation}\label{sinr_o}
    \mathbf{SINR}_{o,k} = \frac{\phi_k P_k\|\mathbf{h}_{b}\|^2 R_k^{-\alpha}}
    { \sum_{j\in\mathcal{K}} I_{jo}+N_0},
\end{equation}
with
$I_{jo}=\sum_{z\in\Phi^o_j\setminus b}\frac{\phi_jP_j \left(|\mathbf{h}_{zo}^{\mathrm{T}}\mathbf{w}_{z}|^2
+\xi_j\|\mathbf{h}_{zo}^{\mathrm{T}}
\mathbf{W}_{z}\|^2\right)}
{r_{zo}^{\alpha}}$ and $\xi_j\triangleq\frac{\phi_j^{-1}-1}{M_j-1}$.
We note from \eqref{bs_ass} that there should be an exclusion region $\mathcal{B}\left(o,\left({P_{j,k}M_{j,k}}\right)
^{\frac{1}{\alpha}}R_k\right)$ around $\textsf{U}_o$ for tier $j\in\mathcal{K}$; all interfering BSs in tier $j$ are located outside of this region.

Let $I_o=\sum_{j\in\mathcal{K}} I_{jo}$ and $s \triangleq \frac{ R_k^{\alpha}\beta_{t}}{\phi_kP_k}$.
$\mathcal{P}_{c,k}$ can be calculated by substituting \eqref{sinr_o} into \eqref{pc_k}
\begin{align}\label{pc_k1}
\mathcal{P}_{c,k}& = \mathbb{E}_{R_k}\mathbb{E}_{I_o} \left[ \mathbb{P}\left\{\|\mathbf{h}_{b}\|^2\geq {s}(I_o+N_0)\right\}\right]\nonumber\\
   &\!\! \!\stackrel{\mathrm{(a)}}
    = \mathbb{E}_{R_k} \mathbb{E}_{I_o}\left[e^{-s(I_o+N_0)}
    \sum_{m=0}^{M_k-1}\frac{s^m(I_o+N_0)^m}
    {m!}\right]\nonumber\\
    &\!\!\!= \sum_{m=0}^{M_k-1}\mathbb{E}_{R_k}\mathbb{E}_{I_o} \left[e^{-sN_0}\sum_{p=0}^{m}\binom{m}{p}\frac{N_0^{m-p}s^me^{-sI_o}}
    {m!}I_o^p\right]\nonumber\\
    &\!\!\!\stackrel{\mathrm{(b)}}= \sum_{m=0}^{M_k-1}\mathbb{E}_{R_k}\left[e^{-sN_0}
     \sum_{p=0}^{m}\binom{m}{p}\frac{(-1)^pN_0^{m-p}s^m}
    {m!}\mathcal{L}^{(p)}_{I_o}(s)\right],
\end{align}
where (a) holds for $\|\mathbf{h}_{b}\|^2\sim \Gamma(M_k,1)$, and (b) is obtained from \cite[Theorem 1]{Hunter08Transmission} with $\mathcal{L}^{(p)}_{I_o}(s)$ the $p$-order derivative of the Laplace transform $\mathcal{L}_{I_o}(s)$ evaluated at $s$.

The major difficulty in calculating $\mathcal{P}_{c,k}$ is calculating $\mathcal{L}^{(p)}_{I_o}(s)$.
Zhang \emph{et al}. \cite{Zhang2013Enhancing}, \cite{Louie2011Open} derived a closed-form expression of $\mathcal{L}^{(p)}_{I_o}(s)$ for ad hoc networks.
 In an ad hoc network, interfering nodes can be arbitrarily close to the desired receiver, which however is not possible in a cellular network.
As mentioned above, with mobile association in cellular networks, interfering BSs are always located outside a certain region around the desired UE.
For the cellular model, Li \emph{et al.} \cite{Li2014Throughput} proposed a useful approach to handle $\mathcal{L}^{(p)}_{I_o}(s)$ by first expressing it in a recursive form, and then transforming it into a lower triangular Toeplitz matrix form.
This yields a tractable expression of the connection probability for multi-antenna transmissions in cellular networks.

The analysis of the PLS in HCNs is quite different from \cite{Li2014Throughput} where only a single-tier network without secrecy demands is considered.
In addition, due to the difficulty in deriving the distribution of the interference signal along with artificial noise, the derivation of $\mathcal{L}^{(p)}_{I_o}(s)$ becomes much more complicated.
Fortunately, we provide an accurate integral form of $\mathcal{P}_{c,k}$ in the following theorem.

\begin{theorem}\label{pck_theorem}
\textit{The connection probability of a typical UE associated with tier $k$ is given by
\begin{align}\label{pc}
     \mathcal{P}_{c,k}=
     \frac{\pi\lambda_k}{\mathcal{S}_k}&\sum_{i=0}^{M_k-1}
\sum_{m=0}^{M_k-1}\sum_{p=0}^{m}\nonumber\\
     &\left(\frac{\beta_{t}N_0}{\phi_kP_k}\right)^{m-p}
     \frac{\mathcal{Z}_{k,m,p,i}}{i!(m-p)!}
     \mathbf{\Theta}_{M_k}^i(p+1,1),
\end{align}
where $\mathcal{Z}_{k,m,p,i}=\int_0^{D_k^2}
x^{i+\frac{\alpha}{2}(m-p)}e^{-\frac{\beta_{t}N_0}{\phi_kP_k}  x^{\frac{\alpha}{2}}-\pi\Upsilon_k x}dx$, and \\
      $\Upsilon_k=\sum_{j\in\mathcal{K}}\lambda_j
      \left({{P}_{j,k}{M}_{j,k}}
        \right)^{{\delta}}
      \Big\{1-\mathcal{A}_j
      +\left(\frac{\phi_{j,k}\beta_{t}}{{M}_{j,k}}
        \right)^{{\delta}}\mathcal{A}_j\Upsilon_{j1}
        +\frac{{\delta}{M}_{j,k}}{\phi_{j,k}\beta_{t}}
        \mathcal{A}_j\Upsilon_{j2}\Big\}$,
  with
\begin{align}\label{G1}
  \Upsilon_{j1} = \begin{cases}
  ~C_{\alpha,M_j+1} , &\xi_j=1,\\
  ~\frac{C_{\alpha,2} }{(1-\xi_j)^{M_j-1}} -\sum\limits_{n=0}^{M_j-2}
  \frac{\xi_j^{1+{\delta}}C_{\alpha,n+2}}
{(1-\xi_j)^{M_j-1-n}}, &\xi_j\neq1,
  \end{cases}
\end{align}
and $\Upsilon_{j2}$ shown in \eqref{G2} at the top of this page.
${_2}F_1(\cdot)$ denotes the Gauss hypergeometric function,
\begin{figure*}[t]
\begin{align}\label{G2}
\Upsilon_{j2}= \begin{cases}
  ~\left(\frac{{M}_{j,k}}{\phi_{j,k}\beta_{t}}\right)^{M_j-1}
    \frac{{_2}F_1\left(M_j,M_j+\delta;M_j+\delta+1;-\frac{ M_{j,k}}{\phi_{j,k}\beta_{t}}\right)}
    {M_j+{\delta}}
, &\xi_j=1,\\
  ~{\frac{{_2}F_1\left(1,{\delta}+1;{\delta}+2;
    -\frac{M_{j,k}}{\phi_{j,k}\beta_{t}}\right)}
  {1+{\delta}(1-\xi_j)^{M_j-1}}
  -\sum\limits_{n=0}^{M_j-2}
  \left( \frac{ M_{j,k}}
  {\xi_j\phi_{j,k}\beta_{t}}\right)^n\frac{{_2}F_1\left(
    n+1,n+1+{\delta};n+2+{\delta};
    -\frac{M_{j,k}}{\xi_j\phi_{j,k}\beta_{t}}\right)}
    {\left(n+1+{\delta}\right)\left(1-\xi_j\right)^{M_j-1-n}}},
  &\xi_j\neq1,
  \end{cases}
\end{align}
\hrulefill
\end{figure*}
and $\mathbf{\Theta}_M^i(p,q)$ denotes the row-$p$-column-$q$ entry of $\mathbf{\Theta}_M^i$, where $\mathbf{\Theta}_M$ is a Toeplitz matrix
\begin{align}
\mathbf{\Theta}_M \triangleq \left[\begin{array}{ccc}
    0 & & \\
    g_1 & 0  & \\
    g_2 & g_1 & 0  ~~~ ~~~  \\
    \vdots  & & ~~\ddots\\
    g_{M-1} & g_{M-2} & \cdots ~~g_1 ~~ 0
\end{array}\right],
\label{QM}
\end{align}
with $g_{i}=\frac{\pi{\delta}}{i-{\delta}}
\sum_{j\in\mathcal{K}}\mathcal{A}_j\lambda_j
\left({ P_{j,k} M_{j,k}}\right)^{{\delta}}
    \left(\frac{\phi_{j,k}\beta_{t}}{M_{j,k}}\right)^{i}
\mathcal{Q}_{j,i}$ where $\mathcal{Q}_{j,i}$ is shown in \eqref{Q} at the top of this page.
\begin{figure*}[t]
\begin{align}\label{Q}
    \mathcal{Q}_{j,i}=\begin{cases}
    ~\binom{M_j+i-1}{M_j-1}
  {_2}F_1\left(M_j+i,i-{\delta};i-{\delta}+1;
    -\frac{\phi_{j,k}\beta_{t}}{M_{j,k}}\right), &\xi_j=1,\\
    ~\frac{{_2}F_1\left(i+1,i-{\delta};i-{\delta}+1;
    -\frac{\phi_{j,k}\beta_{t}}{M_{j,k}}\right)
    -\sum\limits_{n=0}^{M_j-2}\binom{n+i}{n}
    \frac{\xi_j^{i+1}}{(1-\xi_j)^{n}}
    {{_2}F_1\left(n+i+1,i-{\delta};i-{\delta}+1;
    -\frac{\xi_j\phi_{j,k}\beta_{t}}{M_{j,k}}\right)}
    }{(1-\xi_j)^{M_j-1}}, & \xi_j\neq1,
    \end{cases}
\end{align}
\hrulefill
\end{figure*}}
\end{theorem}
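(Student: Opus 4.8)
The plan is to start from \eqref{pc_k1}, which already brings the problem down to evaluating $\mathbb{E}_{R_k}$ of a weighted sum of the derivatives $\mathcal{L}^{(p)}_{I_o}(s)$ at $s=R_k^{\alpha}\beta_t/(\phi_kP_k)$, and to carry this out in four stages. First I would determine the conditional law of the serving distance $R_k$. Conditioning on the typical UE being associated with tier $k$, the rule \eqref{bs_ass} together with the truncated ARSP \eqref{arsp} forces the nearest BS of tier $k$ to lie at distance $R_k\le D_k$ while, for every $j\neq k$, the nearest BS of tier $j$ lies outside $\mathcal{B}(o,(P_{j,k}M_{j,k})^{1/\alpha}R_k)$; since $R_k\le D_k$ gives $(P_{j,k}M_{j,k})^{1/\alpha}R_k\le D_j$, this is exactly the exclusion region already noted before the theorem. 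Using the void probabilities of the independent PPPs $\Phi_j$ and normalizing by $\mathcal{S}_k$ from \eqref{S_k} yields $f_{R_k}(r)=\frac{2\pi\lambda_k r}{\mathcal{S}_k}\exp(-\pi(P_kM_k)^{-\delta}\Xi\,r^2)$ on $[0,D_k]$, which integrates to unity, consistently with Lemma~\ref{association_probability_lemma}.

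Second, I would compute the conditional Laplace transform of the aggregate interference $I_o=\sum_{j}I_{jo}$ in \eqref{sinr_o}. Modelling the active-BS sets $\Phi_j^o$ as independent PPPs with densities $\lambda_j^o=\mathcal{A}_j\lambda_j$ and using the independence of the fades, $\mathcal{L}_{I_o}(s)=\prod_{j\in\mathcal{K}}\mathcal{L}_{I_{jo}}(s)$; applying the probability generating functional of a PPP over $\{r>(P_{j,k}M_{j,k})^{1/\alpha}R_k\}$ gives $\mathcal{L}_{I_{jo}}(s)=\exp(-2\pi\lambda_j^o\int_{(P_{j,k}M_{j,k})^{1/\alpha}R_k}^{\infty}(1-\mathbb{E}_G[e^{-s\phi_jP_jGr^{-\alpha}}])\,r\,dr)$, where $G=|\mathbf{h}_{zo}^{\mathrm T}\mathbf{w}_{z}|^2+\xi_j\|\mathbf{h}_{zo}^{\mathrm T}\mathbf{W}_{z}\|^2$. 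Because $[\mathbf{w}_{z}\,\mathbf{W}_{z}]$ is unitary and independent of $\mathbf{h}_{zo}$, the two summands are independent $\mathrm{Exp}(1)$ and $\Gamma(M_j-1,1)$ variables, so $\mathbb{E}_G[e^{-tG}]=(1+t)^{-1}(1+\xi_jt)^{-(M_j-1)}$. Rescaling $r$ by $R_k$ (equivalently setting $\tilde s=sR_k^{-\alpha}$, which equals the $R_k$-independent constant $\beta_t/(\phi_kP_k)$ at the evaluation point) exhibits $-\ln\mathcal{L}_{I_o}(s)$ as $R_k^2$ times a function of $\tilde s$ alone.

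Third, I would evaluate the resulting finite-range radial integrals and organize the derivatives. Substituting a power of $R_k/r$ turns each integral into a combination of integrals of the form $\int_0^{a}t^{\mu-1}(1+c_jt)^{-\nu}\,dt$ with $a=(P_{j,k}M_{j,k})^{-1}$; a partial-fraction expansion of $\frac{1}{(1+c_jt)(1+\xi_jc_jt)^{M_j-1}}$ when $\xi_j\neq1$ (the repeated-pole case $\xi_j=1$ recovered by continuity) together with $\int_0^a t^{\mu-1}(1+ct)^{-\nu}\,dt\propto{_2}F_1$ produces the closed forms $\Upsilon_{j1},\Upsilon_{j2},\mathcal{Q}_{j,i}$ in \eqref{G1}, \eqref{G2}, \eqref{Q}. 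The term of $-\ln\mathcal{L}_{I_o}$ that is linear in $R_k^2$ and independent of the Laplace variable, added to the exponent of $f_{R_k}$, collapses to $\pi\Upsilon_k R_k^2$, whereas the elementary integrals $\int_a^\infty r^{1-\alpha i}\,dr=a^{2-\alpha i}/(\alpha(i-\delta))$ governing the $i$-th power of the Laplace variable supply the prefactor $\frac{\pi\delta}{i-\delta}$ in $g_i$. Since $\mathcal{L}_{I_o}(s)=e^{-R_k^2\widetilde\Psi(\tilde s)}$, its $p$-th derivative is, by Fa\`{a} di Bruno's formula, $e^{-R_k^2\widetilde\Psi}$ times a Bell polynomial in $\widetilde\Psi',\dots,\widetilde\Psi^{(p)}$ whose $i$-block term carries a factor $(R_k^2)^i$; following the recursion-to-Toeplitz reduction of Li \emph{et al.}~\cite{Li2014Throughput}, this is exactly $\mathbf{\Theta}_{M_k}^{i}(p+1,1)/i!$ built from the $g_i$, which is the origin of the sum over $i$ in \eqref{pc}.

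Finally I would substitute $\mathcal{L}^{(p)}_{I_o}(s)$ back into \eqref{pc_k1}, use $s=R_k^{\alpha}\beta_t/(\phi_kP_k)$ to write $e^{-sN_0}=e^{-\frac{\beta_tN_0}{\phi_kP_k}R_k^{\alpha}}$ and $N_0^{m-p}s^{m-p}=(\beta_tN_0/(\phi_kP_k))^{m-p}R_k^{\alpha(m-p)}$ (the remaining $s^{p}$ being absorbed into the Toeplitz term through $(-s)^{p}\mathcal{L}^{(p)}_{I_o}$), collect all powers of $R_k^2$, and carry out $\mathbb{E}_{R_k}(\cdot)=\int_0^{D_k}(\cdot)f_{R_k}(r)\,dr$ with the substitution $x=r^2$; this pulls out the prefactor $\pi\lambda_k/\mathcal{S}_k$ and leaves precisely the radial integral $\mathcal{Z}_{k,m,p,i}$ over $[0,D_k^2]$, giving \eqref{pc}. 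I expect the main obstacle to be the third stage: evaluating these finite-extent radial integrals in terms of ${_2}F_1$ while cleanly separating the Laplace-variable-independent part (which must reproduce $\pi\Upsilon_k$) from the part that generates the $g_i$, handling the degenerate case $\xi_j=1$ by a limiting argument, and verifying that the $p$-th derivative of the exponential Laplace transform collapses to the stated Toeplitz-matrix form; the remaining steps are routine manipulations with Poisson functionals and gamma-distributed fading.
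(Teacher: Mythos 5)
Your proposal is correct and follows essentially the same route as the paper's Appendix D: PGFL-based Laplace transform of the composite signal-plus-artificial-noise interference with the $R_k$-scaled exclusion region, separation of the $s$-independent linear-in-$R_k^2$ exponent (merged with the $f_{R_k}$ density to give $\pi\Upsilon_k$), the Li-et-al.\ Toeplitz reduction for the derivatives $\mathcal{L}^{(p)}_{I_o}(s)$, and a final $x=r^2$ averaging over the truncated serving-distance law. The only cosmetic difference is that you invoke Fa\`{a} di Bruno/Bell polynomials where the paper writes the equivalent first-order recursion for the derivatives of $e^{-R_k^2\widetilde\Psi}$ before passing to the same matrix form.
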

\begin{proof}
    Please see Appendix \ref{pck_theorem_proof}.
\end{proof}

Although \eqref{pc} seems to be rather unwieldy due to the integral term $\mathcal{Z}_{k,m,p,i}$ and ${_2}F_1(\cdot)$, it is actually easy to compute.
Theorem \ref{pck_theorem} provides a general and accurate expression for the connection probability without requiring time-consuming Monte Carlo simulations.
 It also provides a baseline for comparison with other approximate results.
It appears impossible to extract main properties of the connection probability from \eqref{pc}, thus motivating the need for more compact forms.

\subsection{Interference-limited HCN}
Due to ubiquitous interference in the HCN, the interference at a receiver apparently dominates thermal noise.
Therefore, it is reasonable for us to consider the interference-limited case by ignoring thermal noise, i.e., $N_0=0$.
In this case, the connection probability \eqref{pc_k} simplifies to
\begin{equation}\label{pc_k2}
  \mathcal{P}^{int}_{c,k}=
  \sum_{m=0}^{M_k-1}\mathbb{E}_{R_k}\left[
  \frac{(-1)^m s^m}
    {m!}
    \mathcal{L}^{(m)}_{I_o}(s)\right],
\end{equation}
with a more analytically tractable expression of $\mathcal{P}^{int}_{c,k}$ provided by the following corollary.

\begin{corollary}\label{pck_corollary}
\textit{For the interference-limited HCN, the connection probability of a UE associated with tier $k$ is 
\begin{align}\label{pc_inter}
     \mathcal{P}^{int}_{c,k}&=\frac{\lambda_k}{\mathcal{S}_k}
     \sum_{m=0}^{M_k-1}
     \frac{\left\|\mathbf{\Theta}_{M_k}^m\right\|_1}
     {\pi^{m}\Upsilon^{m+1}_k}
     \left(1-\sum_{l=0}^m\frac{\pi^{l}e^{-\pi\Upsilon_k D_k^2}}{l!D_k^{-2l}\Upsilon^{-l}_k}\right),
\end{align}
where $\|\cdot\|_1$ is the $L_1$ induced matrix norm, i.e., $\|\mathbf{A}\|_1=\max_{1\le j\le N}\sum_{i=1}^M|A_{ij}|$ for $\mathbf{A}\in\mathbb{R}^{M\times N}$.}
\end{corollary}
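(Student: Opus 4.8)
The plan is to derive \eqref{pc_inter} directly from Theorem~\ref{pck_theorem} by specializing to $N_0=0$, so that only two genuinely new steps remain: collapsing the double sum in \eqref{pc} to a single sum, and recognizing the resulting first-column sums of the powers $\mathbf{\Theta}_{M_k}^i$ as the induced $L_1$ norm $\|\mathbf{\Theta}_{M_k}^i\|_1$. (Equivalently one could start from \eqref{pc_k2} and rerun the Laplace-transform/Toeplitz computation of Appendix~\ref{pck_theorem_proof} with every thermal-noise term deleted; the bookkeeping is the same, so I describe the specialization route.)

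First I would set $N_0=0$ in \eqref{pc}. The hidden factor $e^{-sN_0}$ inside $\mathcal{Z}_{k,m,p,i}$ (the term $e^{-\frac{\beta_tN_0}{\phi_kP_k}x^{\alpha/2}}$) becomes $1$, and the scalar $\left(\frac{\beta_tN_0}{\phi_kP_k}\right)^{m-p}$ equals $0^{m-p}$, which vanishes for $p<m$ and equals $1$ for $p=m$. Hence only the $p=m$ term of the inner sum survives, and $\mathcal{Z}_{k,m,m,i}$ reduces to $\int_0^{D_k^2}x^{i}e^{-\pi\Upsilon_k x}\,dx$, independent of $m$; one checks from its definition that $\Upsilon_k$ involves no $N_0$, so it is unchanged. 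This leaves $\mathcal{P}^{int}_{c,k}=\frac{\pi\lambda_k}{\mathcal{S}_k}\sum_{i=0}^{M_k-1}\frac{1}{i!}\Big(\int_0^{D_k^2}x^{i}e^{-\pi\Upsilon_k x}\,dx\Big)\sum_{m=0}^{M_k-1}\mathbf{\Theta}_{M_k}^i(m+1,1)$.

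Next I would identify $\sum_{m=0}^{M_k-1}\mathbf{\Theta}_{M_k}^i(m+1,1)$, the sum of the entries of the first column of $\mathbf{\Theta}_{M_k}^i$, with $\|\mathbf{\Theta}_{M_k}^i\|_1$. Since $\mathbf{\Theta}_{M_k}$ is strictly lower-triangular Toeplitz, so is each power $\mathbf{\Theta}_{M_k}^i$, say with generators $t_0^{(i)},\dots,t_{M_k-1}^{(i)}$; then the $j$-th column has absolute sum $\sum_{l=0}^{M_k-j}|t_l^{(i)}|$, which is non-increasing in $j$ and hence maximized at $j=1$, giving $\|\mathbf{\Theta}_{M_k}^i\|_1=\sum_{l=0}^{M_k-1}|t_l^{(i)}|$. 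The generators $g_i$ of $\mathbf{\Theta}_{M_k}$ are nonnegative ($i\ge 1$ gives $i-\delta>0$ because $\delta<1$, and $\mathcal{Q}_{j,i}\ge 0$ by the probabilistic origin of \eqref{Q}), so every $t_l^{(i)}$ is nonnegative and the absolute values can be dropped, yielding $\sum_{m=0}^{M_k-1}\mathbf{\Theta}_{M_k}^i(m+1,1)=\|\mathbf{\Theta}_{M_k}^i\|_1$.

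Finally I would evaluate the finite integral via the standard lower-incomplete-gamma identity $\int_0^{a}x^{i}e^{-bx}\,dx=\frac{i!}{b^{i+1}}\big(1-e^{-ab}\sum_{l=0}^{i}\frac{(ab)^l}{l!}\big)$ (repeated integration by parts) with $b=\pi\Upsilon_k$ and $a=D_k^2$, then substitute back: the $i!$ cancels, one factor of $\pi$ from the prefactor combines with $\pi^{-(i+1)}$ to give $\pi^{-i}$, and renaming $i\to m$ produces exactly \eqref{pc_inter}. The only non-routine point is the $L_1$-norm identification above — in particular the nonnegativity of the $\mathbf{\Theta}_{M_k}$ generators, which is where I expect to spend the most care; everything else is elementary rearrangement.
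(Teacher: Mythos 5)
Your proposal is correct and follows essentially the same route as the paper's proof, which sets $N_0=0$ in the pre-averaged expression \eqref{pc_app_final} so that only the $p=m$ term survives, identifies the first-column sum of $\mathbf{\Theta}_{M_k}^i$ with $\|\mathbf{\Theta}_{M_k}^i\|_1$, and then averages over $R_k$ using \eqref{F_Dk}. You merely start from the already-averaged form \eqref{pc} and supply details the paper leaves implicit — the lower-incomplete-gamma evaluation of $\int_0^{D_k^2}x^{i}e^{-\pi\Upsilon_k x}\,dx$ and the nonnegativity/Toeplitz argument justifying the $L_1$-norm identification — both of which check out.
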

\begin{proof}
    Please see Appendix \ref{pck_corollary_proof}.
\end{proof}

Corollary \ref{pck_corollary} provides a much simpler expression for the connection probability than \eqref{pc}.
Note that the term $\sum_{l=0}^m\frac{\pi^{l}e^{-\pi\Upsilon_k D_k^2}}
     {l!D_k^{-2l}\Upsilon^{-l}_k}$ in \eqref{pc_inter} is a consequence of $\tau\neq 0$.
     This term goes to zero as $\tau\rightarrow 0$ (i.e., non-threshold mobile association) since $\lim_{\tau\rightarrow 0}D_k\rightarrow\infty$.
As shown in Fig. \ref{PC_P1_TAU_COM}, $\mathcal{P}_{c,k}$ and $\mathcal{P}^{int}_{c,k}$  nearly merge, and in the subsequent analysis we focus on the latter for convenience.

\begin{figure}[!t]
\centering
\includegraphics[width=3.0in]{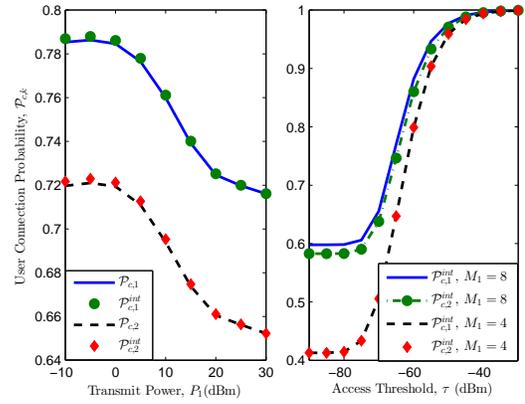}
\caption{Connection probabilities in a 2-tier HCN.
 $\mathcal{P}_{c,k}$ in \eqref{pc} and $\mathcal{P}^{int}_{c,k}$ in \eqref{pc_inter} vs. $P_1$ in the left figure and
 $\mathcal{P}^{int}_{c,k}$ vs. $\tau$ for different $M_1$'s in the right figure, with $\{P_1,P_2\}=\{30,10\}$dBm, $M_2=4$, $\{\lambda_2,\lambda_u\}=\{4, 10\}\lambda_1$, $\beta_{t}=10$, and $\{\phi_1,\phi_2\}=\{0.8, 0.8\}$.
 Unless otherwise specified, we set $\alpha=4$ and $\lambda_1=\frac{1}{\pi 400^2\mathrm{m}^2 }$.}
\label{PC_P1_TAU_COM}
\end{figure}

\subsection{Asymptotic Analysis on $\mathcal{P}^{int}_{c,k}$}
It is important for network design to understand how $\mathcal{P}^{int}_{c,k}$ is affected by system parameters such as $\tau$, $P_k$, $\lambda_k$ and $\lambda_u$, etc.
Since $\Upsilon_k$ and $\mathbf{\Theta}_{M_k}$ in \eqref{pc_inter} are coupled through these parameters in a very complicated way, the relationship between them is generally not explicit.
In the following, we provide some insights into the behavior of $\mathcal{P}^{int}_{c,k}$ w.r.t. the above parameters by performing an asymptotic analysis, with the corresponding proof relegated to Appendix \ref{proof_property_2_5}.
\begin{property}\label{pc_property2}
\textit{For the case that all tiers share the same number of BS antennas $M$ and power allocation ratio $\phi$, and $\lambda_u\gg \lambda_j$, $\forall j\in\mathcal{K}$, $\mathcal{P}^{int}_{c,k}$ converges to a value that is independent of the transmit power $P_j$, BS density $\lambda_j$ and $k\in\mathcal{K}$ as $\tau\rightarrow0$.}
\end{property}

\begin{property}\label{pc_property3}
\textit{$\mathcal{P}_{c,k}^{int} \rightarrow 1$ as $\tau\rightarrow\infty$ for $k\in\mathcal{K}$.
}
\end{property}

\begin{property}\label{pc_property4}
\textit{When the transmit power of tier 1 is much larger than that of the other tiers, $\mathcal{P}^{int}_{c,k}$ increases with $\tau$ and $\lambda_l$, $\forall l\neq1$, and decreases with $\lambda_u$, $\forall k\in\mathcal{K}$.}
\end{property}

\begin{property}\label{pc_property5}
   \textit{When $P_{j,1}\ll 1$, $\forall j\neq 1$, $\mathcal{P}^{int}_{c,k}$ decreases with $P_1$, and converges to a constant value as $P_1\rightarrow \infty$, $\forall k\in\mathcal{K}$.
   }
\end{property}

Property \ref{pc_property2} shows that under a loose control on mobile access ($\tau\rightarrow0$) with $\{M_j\}=M$ and $\{\phi_j\}=\phi$, the connection probability becomes insensitive to transmit power and BS densities, i.e., increasing transmit power or randomly adding new infrastructure
does not influence connection performance (link quality).
 This \emph{insensitivity} property obtained for this special case is also observed in a single-antenna unbiased HCN \cite{Dhillon2012Modeling,Jo2012Heterogeneous}.

Property \ref{pc_property3} implies that increasing $\tau$ increases connection probability, just as explained in Sec. II-D.
Nevertheless, $\tau$ should not be set as large as possible in practice.
As will be observed later in Sec. VI, $\tau$ should be properly chosen to achieve a good secrecy throughput performance under certain connection constraints.

Properties \ref{pc_property2} and \ref{pc_property3} are validated in Fig. \ref{PC_P1_TAU_COM}.
We find that both $\mathcal{P}_{c,1}^{int}$ and $\mathcal{P}_{c,2}^{int}$ increase with $M_1$.
The reason is that, on one hand a larger $M_1$ produces a higher diversity gain, and improves the link quality for tier 1.
On the other hand, a larger $M_1$ also provides a stronger bias towards admitting UEs, thus the UE originally associated with tier 2 under low link quality (e.g., at the edge of a cell in tier 2) now connects to tier 1, which as a consequence enhances the link quality of tier 2.

\begin{figure}[!t]
\centering
\includegraphics[width=3.0in]{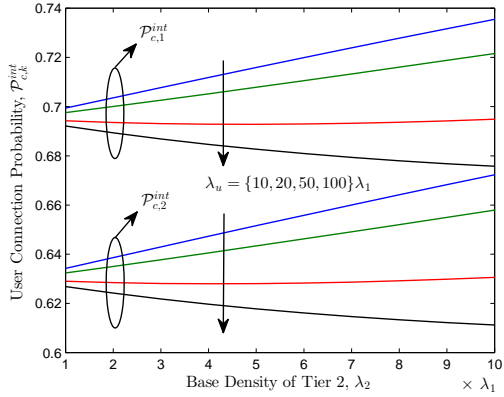}
\caption{Connection probability in a 2-tier HCN vs. $\lambda_2$ for different $\lambda_u$'s, with $\alpha=4$, $\{P_1,P_2\}=\{30,10\}$dBm, $\{M_1,M_2\}=\{6,4\}$, $\lambda_1=\frac{1}{\pi 400^2\mathrm{m}^2 }$, $\tau=-90$dBm, $\beta_{t}=5$, and $\{\phi_1,\phi_2\}=\{1, 0.5\}$.}
\label{Pc_La_Lu}
\end{figure}
Property \ref{pc_property4} provides some interesting counter-intuitive insights into connection performance.
For instance, deploying more pico/femto BSs may improve connection probabilities.
This is because a larger $\lambda_l$ decreases the number of active BSs in the other tiers, which reduces the aggregate network interference especially when the transmit power of the other tiers is large.
However, the connection probability decreases when more UEs are introduced, since more BSs are now activated, resulting in greater interference.
Although Property \ref{pc_property4} is obtained as ${P_{j,1}}\rightarrow 0$, it applies more generally, as illustrated in Fig. \ref{Pc_La_Lu}.
We see that, a larger $\lambda_u$ decreases $\mathcal{P}_{c,k}^{int}$ for $k=1,2$.
In addition, we observe that, when $\frac{\lambda_u}{\lambda_2}\le 50$, a larger $\lambda_2$ increases $\mathcal{P}_{c,k}^{int}$, whereas when $\frac{\lambda_u}{\lambda_2}>50$, it decreases $\mathcal{P}_{c,k}^{int}$.
The underlying reason is that, in the latter both $\mathcal{A}_1$ and $\mathcal{A}_2$ nearly reach one, hence deploying more microcells significantly increases interference, which deteriorates link reliability.
Nevertheless, this performance degradation can be effectively mitigated by setting a larger access threshold, since in this way more BSs remain idle, alleviating the network interference.

\begin{figure}[!t]
\centering
\includegraphics[width=3.0in]{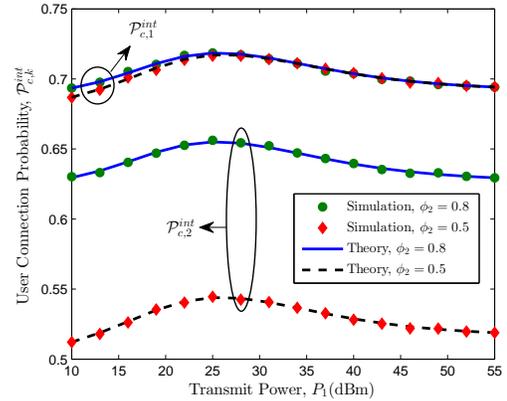}
\caption{Connection probability in a 2-tier HCN vs. $P_1$ for different $\phi_2$'s, with $\alpha=4$, $P_2=\mathrm{10}$dBm, $\{M_1,M_2\}=\{6,4\}$, $\lambda_1=\frac{1}{\pi 400^2\mathrm{m}^2 }$, $\{\lambda_2,\lambda_u\}=\{2\lambda_1, 5\lambda_1\}$, $\tau = -90$dBm, $\beta_{t}=5$, and $\phi_1=1$.}
\label{Pc_P_Phi}
\end{figure}

Property \ref{pc_property5} implies that as $P_1$ increases, $\mathcal{P}^{int}_{c,1}$ first increases and then decreases, and eventually levels off.
Increasing transmit power is not always beneficial to connection performance, since the growth of signal power is counter-balanced by the growth of interference power.
The same is true for $\mathcal{P}^{int}_{c,j}$, $\forall j\neq 1$.
The underlying reason is that, as $P_1$ gets larger, $\mathcal{A}_1$ increases while $\mathcal{A}_j$ decreases, which increases interference from tier $1$ while reducing that from the other $K-1$ tiers.
When $P_1$ is relatively small, the decrement of interference from the other $K-1$ tiers dominates the increment of interference from tier $1$, so the aggregate interference actually reduces;
the opposite occurs as $P_1$ further increases.
This property is confirmed in Fig. \ref{Pc_P_Phi}.
We also find that, for a given $P_1$, $\mathcal{P}^{int}_{c,2}$ increases significantly with $\phi_2$, while $\mathcal{P}^{int}_{c,1}$ experiences negligible impact since the inference in tier 1 varies little.
Even though, we still observe a slight improvement in $\mathcal{P}^{int}_{c,1}$ in the small $P_1$ region as $\phi_2$ increases.
This occurs because focusing more power on the desired UE in tier 2 to some degree decreases the residual interference (artificial noise and leaked signal) to the UE associated with tier 1.
 However, the reduced interference becomes negligible as $P_1$ increases, and $\mathcal{P}^{int}_{c,1}$ becomes insensitive to $\phi_2$.

\section{User Secrecy Probability}
In this section, we investigate the secrecy probability of a randomly located UE.
 Here secrecy probability corresponds to the probability that a secret message is not decoded by any Eve.

When accounting for secrecy demands, we should not underestimate the wiretap capability of Eves.
In our work, we consider a worst-case wiretap scenario in which Eves are assumed to have the capability of multiuser decoding (e.g., successive interference cancellation), thus the interference created by concurrent transmission of information signals can be completely resolved \cite{Zhang2013Enhancing}.
Therefore, Eves only receive the artificial noise from all BSs in the HCN.
The received signal at $\textsf{E}_e$ is given by
\begin{align}\label{ye}
    y_{e} =& \underbrace{\frac{\sqrt{\phi_k P_k}\mathbf{h}_{be}^\mathrm{T}\mathbf{w}_{b}s_{b}}
    {r_{be}^{\alpha/2}}}_{\mathrm{information~signal}}
    +\underbrace{\frac{\sqrt{(1-\phi_k) P_k}\mathbf{h}_{be}^{\mathrm{T}}
    \mathbf{W}_{b}\mathbf{v}_{b}}{r_{be}^{\alpha/2}}}
    _{\mathrm{serving-BS~artificial~noise}}\nonumber\\
    &+ \underbrace{\sum_{j\in\mathcal{K}}\sum_{z\in\Phi_j^o
    \setminus b} \frac{\sqrt{(1-\phi_j) P_j}\mathbf{h}_{ze}^{\mathrm{T}}
    \mathbf{W}_{z}\mathbf{v}_{z}}{r_{ze}^{\alpha/2}}}
    _{\mathrm{intra-~ and ~ cross-tier~artificial~noise}}
    +n_{e}.
\end{align}

We consider the non-colluding wiretap scenario where each Eve individually decodes secret messages.
In that case, transmission is secure only if secrecy is achieved against all Eves.
Accordingly, the secrecy probability of tier $k$ is defined as the probability of the
event that the instantaneous SINR of an arbitrary Eve falls below a target SINR $\beta_{e}$, i.e.,
\begin{align}\label{ps_k}
    \mathcal{P}_{s,k} \triangleq &\mathbb{E}_{\Phi_1}\cdots
    \mathbb{E}_{\Phi_K}\mathbb{E}_{\Phi_e}\nonumber\\
   & \left[\prod_{e\in\Phi_e}\mathbb{P}
    \left\{\mathbf{SINR}_{e,k}
    <\beta_{e}|\Phi_e, \Phi_1,\cdots,\Phi_K\right\}\right],
\end{align}
where $\mathbf{SINR}_{e,k}$ is given by
\begin{equation}\label{sinr_e}
    \mathbf{SINR}_{e,k} = \frac{\phi_k P_k |\mathbf{h}_{be}^\mathrm{T}\mathbf{w}_{b}|^2 r^{-\alpha}_{be}}
    {I_{be}+\sum_{j\in\mathcal{K}}I_{je}+N_0},~\forall e\in\Phi_e,
\end{equation}
with $I_{be}\triangleq\frac{(1-\phi_k)P_k
|\mathbf{h}_{be}^{\mathrm{T}}
\mathbf{W}_{b}|^2}{(M_k-1)r^{\alpha}_{be}}$
and $I_{je}\triangleq\sum_{z\in\Phi^o_j\setminus b}\frac{(1-\phi_j)P_j\|\mathbf{h}_{ze}^{\mathrm{T}}
\mathbf{W}_{z}\|^2}{(M_j-1)r^{\alpha}_{ze}}$.
Unfortunately, it is intractable to derive an exact expression for $\mathcal{P}_{s,k}$ from \eqref{ps_k}.
Instead, we provide the upper and lower bounds of $\mathcal{P}_{s,k}$ in the following theorem as done in \cite{Zhou2011Throughput} and \cite{Zhang2013Enhancing}.

\begin{theorem}  \label{psk_theorem}
\textit{The secrecy probability $\mathcal{P}_{s,k}$ in \eqref{ps_k} satisfies
\begin{equation}\label{ps_k_inequality}
  \mathcal{P}_{s,k}^L \le \mathcal{P}_{s,k} \le \mathcal{P}_{s,k}^U,
\end{equation}
where
\begin{align}\label{ps_k_lower}
    \mathcal{P}_{s,k}^L &= \exp\left(-\frac{\pi\lambda_e}
    {\left(1+\xi_k\beta_{e}\right)^{M_k-1}}
    \int_0^{\infty}e^{-\frac{\beta_{e} N_0}{\phi_kP_k}r^{\frac{\alpha}{2}}
    -\pi\psi_k{\beta_e^\delta} r}dr\right),\\
\label{ps_k_upper}
    \mathcal{P}_{s,k}^U &= 1-\frac{\pi\lambda_e}
    {\left(1+\xi_k\beta_{e}\right)^{M_k-1}}
    \int_0^{\infty}e^{-\frac{\beta_{e} N_0}{\phi_kP_k}r^{\frac{\alpha}{2}}
    -\pi\psi_k{\beta_e^\delta}r-\pi\lambda_e r}dr,
\end{align}
with $\psi_k\triangleq\sum_{j\in\mathcal{K}}
    \mathcal{A}_j\lambda_jC_{\alpha,M_j}
    \left({\xi_j\phi_{j,k} P_{j,k}}\right)^
  {{\delta}}$.}
  \end{theorem}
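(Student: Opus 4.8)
The plan is to bound the product $\prod_{e\in\Phi_e}\mathbb{P}\{\mathbf{SINR}_{e,k}<\beta_e\mid\cdots\}$ by isolating the serving-BS artificial-noise term $I_{be}$, which is the source of intractability (it couples each Eve to the distance $r_{be}$, which is itself correlated with the UE--BS geometry). First I would write, for a fixed Eve at $e$, the conditional non-outage probability $\mathbb{P}\{\mathbf{SINR}_{e,k}<\beta_e\}=1-\mathbb{P}\{\phi_kP_k|\mathbf{h}_{be}^{\mathrm T}\mathbf{w}_b|^2 r_{be}^{-\alpha}\ge \beta_e(I_{be}+\sum_j I_{je}+N_0)\}$. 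Since $|\mathbf{h}_{be}^{\mathrm T}\mathbf{w}_b|^2\sim\mathrm{Exp}(1)$ (the beamformer $\mathbf{w}_b$ is independent of $\mathbf{h}_{be}$), the inner probability is a Laplace-transform-type quantity $\mathbb{E}[e^{-s(I_{be}+\sum_jI_{je}+N_0)}]$ with $s=\beta_e r_{be}^\alpha/(\phi_kP_k)$. The term $I_{be}$ contributes $\mathbb{E}[e^{-sI_{be}}]=(1+\xi_k\beta_e)^{-(M_k-1)}$ using $\|\mathbf{h}_{be}^{\mathrm T}\mathbf{W}_b\|^2\sim\Gamma(M_k-1,1)$ and $\xi_k=(\phi_k^{-1}-1)/(M_k-1)$; this is where the $(1+\xi_k\beta_e)^{M_k-1}$ prefactor in both bounds comes from.

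Next I would handle the aggregate cross-tier artificial noise $\sum_j I_{je}$ via the probability generating functional (PGFL) of the PPPs $\Phi_j^o$ (density $\lambda_j^o=\mathcal{A}_j\lambda_j$), treating the interferers as a PPP \emph{over all of $\mathbb{R}^2$} (ignoring the exclusion region around $e$, which is a standard simplification for Eve analysis and also what makes the bound loose but clean). Each interferer contributes $\|\mathbf{h}_{ze}^{\mathrm T}\mathbf{W}_z\|^2\sim\Gamma(M_j-1,1)$, and the PGFL integral $\exp(-2\pi\lambda_j^o\int_0^\infty(1-\mathbb{E}[e^{-s(1-\phi_j)P_j G/((M_j-1)r^\alpha)}])r\,dr)$ evaluates, after the usual substitution $u=r^{-\alpha}$ and a Beta-function identity, to $\exp(-\pi\lambda_j^o C_{\alpha,M_j}(s(1-\phi_j)P_j/(M_j-1))^\delta)$. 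Summing exponents over $j$ and substituting $s$ gives the $\exp(-\pi\psi_k\beta_e^\delta r_{be})$ factor with $\psi_k=\sum_j\mathcal{A}_j\lambda_j C_{\alpha,M_j}(\xi_j\phi_{j,k}P_{j,k})^\delta$ (here I absorb $(1-\phi_j)P_j/(M_j-1)=\xi_j\phi_j P_j$ normalized by $\phi_kP_k$). Together with the noise factor $e^{-\beta_eN_0 r_{be}^{\alpha/2}/(\phi_kP_k)}$, the conditional secrecy probability for one Eve becomes $1-(1+\xi_k\beta_e)^{-(M_k-1)}e^{-\beta_eN_0r_{be}^{\alpha/2}/(\phi_kP_k)-\pi\psi_k\beta_e^\delta r_{be}}$.

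The final and most delicate step is to take the product over $e\in\Phi_e$ and the expectation over $\Phi_e$ while dealing with the residual correlation through $r_{be}$. For the \textbf{lower bound} I would use the inequality $\prod_e(1-a_e)\ge e^{-\sum_e a_e/(1-\max_e a_e)}$ or, more simply, lower-bound each $r_{be}$ in the exponent by $0$ (which maximizes each $a_e$ and thus... actually minimizes the product), then apply the PPP PGFL over $\Phi_e$: $\mathbb{E}_{\Phi_e}\prod_e(1-a_e)\ge\exp(-\lambda_e\int_{\mathbb R^2}a(x)\,dx)$ when the $a_e$ are treated as functions of Eve position only, giving $\mathcal{P}_{s,k}^L=\exp(-\pi\lambda_e(1+\xi_k\beta_e)^{-(M_k-1)}\int_0^\infty e^{-\beta_eN_0r^{\alpha/2}/(\phi_kP_k)-\pi\psi_k\beta_e^\delta r}dr)$ after the change of variable $r=\pi x^2$ style manipulation (writing the area integral in polar form and renaming). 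For the \textbf{upper bound} I would keep only the nearest Eve, or equivalently use $\prod_e(1-a_e)\le 1-\max_e a_e\le 1-a_{e^*}$ and then $\mathbb{E}[1-a_{e^*}]$, but the cleaner route giving \eqref{ps_k_upper} is $\mathbb{E}_{\Phi_e}\prod_e(1-a_e)\le 1-\mathbb{E}_{\Phi_e}\sum_e a_e e^{-(\text{something})}$; concretely one uses that $\prod_e(1-a_e)\le 1-\sum_e a_e\prod_{e'\ne e}(1-a_{e'})$ is false, so instead I would bound $1-\prod_e(1-a_e)\ge a_{e^*}$ where $e^*$ is the Eve with largest $a$, and compute $\mathbb{E}[a_{e^*}]$ via the CDF of the nearest relevant Eve --- this is precisely where the extra $e^{-\pi\lambda_e r}$ factor in \eqref{ps_k_upper} originates (the void probability that no Eve is closer). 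The hard part will be making this last PGFL/extreme-value argument rigorous given that $r_{be}$ is random and correlated with the active-BS process; the paper evidently finesses this by the worst/best-case substitution on $r_{be}$ in the exponent (taking $r_{be}\to 0$ is optimistic for Eve and hence yields a valid \emph{lower} bound on $\mathcal{P}_{s,k}$, while a matching treatment with the void probability yields the upper bound), so I would follow \cite{Zhou2011Throughput,Zhang2013Enhancing} and present these as the two sides of \eqref{ps_k_inequality}.
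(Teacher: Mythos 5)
Your computation of the per-Eve conditional CCDF is exactly the paper's: the $\mathrm{Exp}(1)$ signal gain turns $\mathbb{P}\{\mathbf{SINR}_{e,k}\ge\beta_e\}$ into $e^{-\kappa N_0}\mathcal{L}_{I_e}(\kappa)$ with $\kappa=r_{be}^{\alpha}\beta_e/(\phi_kP_k)$, the serving-BS artificial noise contributes the $(1+\xi_k\beta_e)^{1-M_k}$ prefactor via the $\Gamma(M_k-1,1)$ distribution, and the PGFL over the active-BS processes on all of $\mathbb{R}^2$ (no exclusion region around an Eve) yields $\exp(-\pi\psi_k\beta_e^{\delta}r_{be}^2)$. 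Your upper bound is also the paper's: secrecy against all Eves implies secrecy against the Eve nearest the serving BS, whose distance has PDF $2\pi\lambda_e re^{-\pi\lambda_e r^2}$, and that void probability is precisely the extra $e^{-\pi\lambda_e r}$ in \eqref{ps_k_upper}; no further bounding is needed there because the expectation of $1-a_{e^*}$ over the BS processes is linear and can be computed exactly.

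There is, however, a genuine gap in your lower-bound argument: the passage from $\mathbb{E}\prod_e(1-a_e)$ to $\exp(-\lambda_e\int a)$. The $a_e$ are not deterministic functions of the Eve positions --- they all depend on the \emph{same} realization of the active-BS processes through the aggregate artificial noise --- so the PGFL of $\Phi_e$ only gets you to $\mathbb{E}_{\Phi_1,\dots,\Phi_K}\bigl[\exp\bigl(-2\pi\lambda_e\int_0^\infty\mathbb{P}\{\mathbf{SINR}_{e,k}\ge\beta_e\mid\Phi_1,\dots,\Phi_K\}\,r\,dr\bigr)\bigr]$, with the BS expectation still outside the exponential. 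The missing ingredient is Jensen's inequality, $\mathbb{E}[e^{-X}]\ge e^{-\mathbb{E}[X]}$, which pushes that expectation inside and replaces the conditional CCDF by its unconditional average; this single convexity step is what makes \eqref{ps_k_lower} a valid lower bound. The substitutes you float do not work: setting $r_{be}=0$ in the exponent makes every factor equal to the constant $1-(1+\xi_k\beta_e)^{1-M_k}<1$, so the product over the almost-surely infinite Eve PPP is zero (a trivial bound), and the product inequality $\prod_e(1-a_e)\ge e^{-\sum_e a_e/(1-\max_e a_e)}$ does not produce \eqref{ps_k_lower} either. Your closing guess that the paper finesses the correlation ``by a worst/best-case substitution on $r_{be}$'' is therefore off the mark; the finesse is Jensen. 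With that one line inserted, your proof coincides with the paper's.
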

\begin{proof}
    Please see Appendix \ref{psk_theorem_proof}.
\end{proof}

Interestingly, when $\lambda_e\ll 1$, the two bounds merge, i.e.,
\begin{equation}\label{ps_k_bound}
    \mathcal{P}_{s,k}^L \approx \mathcal{P}_{s,k}^{o}:=1-\frac{\pi\lambda_e\int_0^{\infty}e^{-\frac{\beta_{e} N_0}{\phi_kP_k}r^{\frac{\alpha}{2}}
    -\pi\psi_k{\beta_e^\delta}r}dr}
    {\left(1+\xi_k\beta_{r,k}\right)^{M_k-1}}
    \approx \mathcal{P}_{s,k}^U,
\end{equation}
and $ \mathcal{P}_{s,k}^{o}\rightarrow 1$ as $\lambda_e\rightarrow 0$.
 It implies that as $\mathcal{P}_{s,k}\rightarrow 1$, both the upper and lower bounds approach the exact value.
 In other words, $\mathcal{P}_{s,k}$ can be approximated by $\mathcal{P}_{s,k}^{o}$ in the high secrecy probability region.

For some special cases, the calculation for the upper and lower bounds can be simplified.
For example, substituting $\alpha=4$ into Theorem \ref{psk_theorem}, we have the following corollary.
\begin{corollary}\label{psk_4}
\textit{When $\alpha = 4$, the secrecy probability of tier $k$ $\mathcal{P}^{\alpha=4}_{s,k}$ satisfies
\begin{equation}\label{ps_k_ineuq_1}
  \mathcal{P}_{s,k}^{L,\alpha = 4} \le \mathcal{P}^{\alpha = 4}_{s,k} \le \mathcal{P}_{s,k}^{U,\alpha = 4},
\end{equation}
where
\begin{align}\label{ps_k_lower_1}
    \mathcal{P}_{so}^{L,\alpha = 4} &= \exp\left(-\frac{\pi^{\frac{1}{2}}\lambda_e
    e^{\frac{\psi_k^2
    {\beta_e^{2\delta}}}
    {\gamma_k}}}{\sqrt{\gamma_k}\left(1+\xi_k\beta_{e}\right)^{M_k-1}}
    \left(1-\Omega\left(\frac{\psi_k{\beta_e^\delta}}{\sqrt{\gamma_k}}
\right)\right)\right),\\
\label{ps_k_upper_1}
    \mathcal{P}_{s,k}^{U,\alpha = 4} &= 1-\frac{\pi^{\frac{1}{2}}\lambda_e
    e^{\frac{ (\lambda_e+\psi_k{\beta_e^\delta})^2}
    {\gamma_k}}}  {\sqrt{\gamma_k}\left(1+\xi_k\beta_{e}\right)^{M_k-1}}
    \left(1-\Omega\left(\frac{\lambda_e+\psi_k{\beta_e^\delta}}
    {\sqrt{\gamma_k}}
\right)\right),
\end{align}
with $\Omega(x)\triangleq\frac{1}{\sqrt{\pi}}\int_0^{x^2}
\frac{e^{-t}}{\sqrt{t}}dt$ and $\gamma_k\triangleq\frac{4\beta_{e}N_0}{\pi^2\phi_kP_k}$.}
\end{corollary}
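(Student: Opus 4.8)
The plan is to specialize the two bounds of Theorem~\ref{psk_theorem} by setting $\alpha=4$, which forces $\delta=2/\alpha=1/2$ and, crucially, $r^{\alpha/2}=r^{2}$. Once this is done, both $\mathcal{P}_{s,k}^{L}$ in \eqref{ps_k_lower} and $\mathcal{P}_{s,k}^{U}$ in \eqref{ps_k_upper} hinge on a single Gaussian-type integral of the form $\int_{0}^{\infty}e^{-ar^{2}-br}\,dr$, with $a=\frac{\beta_{e}N_{0}}{\phi_{k}P_{k}}$ in both cases, $b=\pi\psi_{k}\beta_{e}^{\delta}$ for the lower bound, and $b=\pi\psi_{k}\beta_{e}^{\delta}+\pi\lambda_{e}$ for the upper bound. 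So the corollary is really just an explicit evaluation of this integral in closed form.

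First I would complete the square in the exponent, writing $ar^{2}+br=a\bigl(r+\tfrac{b}{2a}\bigr)^{2}-\tfrac{b^{2}}{4a}$, so that after the shift $u=r+\tfrac{b}{2a}$ one gets $\int_{0}^{\infty}e^{-ar^{2}-br}\,dr=e^{b^{2}/(4a)}\int_{b/(2a)}^{\infty}e^{-au^{2}}\,du$. The rescaling $v=\sqrt{a}\,u$ then turns the remaining integral into $\frac{1}{\sqrt{a}}\int_{b/(2\sqrt{a})}^{\infty}e^{-v^{2}}\,dv=\frac{\sqrt{\pi}}{2\sqrt{a}}\bigl(1-\mathrm{erf}\bigl(\tfrac{b}{2\sqrt{a}}\bigr)\bigr)$. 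Next I would observe that the function $\Omega(x)=\frac{1}{\sqrt{\pi}}\int_{0}^{x^{2}}\frac{e^{-t}}{\sqrt{t}}\,dt$ introduced in the statement coincides with $\mathrm{erf}(x)$ — this follows immediately from the substitution $t=s^{2}$ — so $1-\Omega(x)=\mathrm{erfc}(x)$, and the integral becomes $\frac{\sqrt{\pi}}{2\sqrt{a}}\bigl(1-\Omega\bigl(\tfrac{b}{2\sqrt{a}}\bigr)\bigr)e^{b^{2}/(4a)}$.

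What remains is bookkeeping. Using $\gamma_{k}\triangleq\frac{4\beta_{e}N_{0}}{\pi^{2}\phi_{k}P_{k}}$ we have $a=\frac{\pi^{2}\gamma_{k}}{4}$, i.e. $\sqrt{a}=\frac{\pi\sqrt{\gamma_{k}}}{2}$, hence the prefactor $\frac{\sqrt{\pi}}{2\sqrt{a}}$ collapses to $\frac{1}{\sqrt{\pi\gamma_{k}}}$; multiplying by the $\pi\lambda_{e}/(1+\xi_{k}\beta_{e})^{M_{k}-1}$ term standing in front of each integral produces exactly the $\pi^{1/2}\lambda_{e}/(\sqrt{\gamma_{k}}(1+\xi_{k}\beta_{e})^{M_{k}-1})$ factor in \eqref{ps_k_lower_1}--\eqref{ps_k_upper_1}. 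For the lower bound, $b=\pi\psi_{k}\beta_{e}^{\delta}$ gives $\frac{b^{2}}{4a}=\frac{\psi_{k}^{2}\beta_{e}^{2\delta}}{\gamma_{k}}$ and $\frac{b}{2\sqrt{a}}=\frac{\psi_{k}\beta_{e}^{\delta}}{\sqrt{\gamma_{k}}}$; for the upper bound, $b=\pi(\psi_{k}\beta_{e}^{\delta}+\lambda_{e})$ gives $\frac{b^{2}}{4a}=\frac{(\lambda_{e}+\psi_{k}\beta_{e}^{\delta})^{2}}{\gamma_{k}}$ and $\frac{b}{2\sqrt{a}}=\frac{\lambda_{e}+\psi_{k}\beta_{e}^{\delta}}{\sqrt{\gamma_{k}}}$. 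Substituting these back into $\mathcal{P}_{s,k}^{L}$ and $\mathcal{P}_{s,k}^{U}$ yields the claimed expressions, and the sandwich \eqref{ps_k_ineuq_1} is inherited verbatim from \eqref{ps_k_inequality}.

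There is no genuine obstacle here — this is a direct corollary of Theorem~\ref{psk_theorem}. The only step requiring a little care is the identification $\Omega=\mathrm{erf}$ together with the correct normalization of $\gamma_{k}$: a slip in the constants would propagate simultaneously through the exponential factor $e^{\,\cdot/\gamma_{k}}$ and through the argument of $\Omega$, and the easiest place to err is keeping track of the factor $\pi$ hidden in $b$, which enters because $\psi_{k}\beta_{e}^{\delta}$ (and $\lambda_{e}$) are multiplied by $\pi$ inside the integrands of \eqref{ps_k_lower}--\eqref{ps_k_upper}.
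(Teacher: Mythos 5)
Your proposal is correct and follows exactly the route the paper intends: the paper offers no separate proof of Corollary~\ref{psk_4}, stating only that it follows by "substituting $\alpha=4$ into Theorem~\ref{psk_theorem}," and your completion of the square in $\int_0^\infty e^{-ar^2-br}\,dr$, the identification $\Omega(x)=\mathrm{erf}(x)$, and the constant bookkeeping via $a=\pi^2\gamma_k/4$ all check out and reproduce \eqref{ps_k_lower_1}--\eqref{ps_k_upper_1} precisely.
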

From Corollary \ref{psk_4}, an approximation for $\mathcal{P}_{s,k}^{\alpha = 4}$ in the high secrecy probability region is
\begin{equation}\label{ps_k_bound_1} \mathcal{P}_{s,k}^{o,\alpha := 4}=1-\frac{\pi^{\frac{1}{2}}\lambda_e
    e^{\frac{\psi_k^2\beta_e^{2\delta}}
    {\gamma_k}}}  {\sqrt{\gamma_k}\left(1+\xi_k\beta_{e}\right)^{M_k-1}}
    \left(1-\Omega\left(\frac{\psi_k\beta_e^{\delta}}
    {\sqrt{\lambda_k} } \right)\right).
\end{equation}
Substituting $N_0=0$ into Theorem \ref{psk_theorem}, we have the following corollary.
\begin{corollary}\label{psk_inter_corollary}
\textit{For an interference-limited HCN, the secrecy probability of tier $k$ $\mathcal{P}^{int}_{s,k}$ satisfies
\begin{equation}\label{ps_ineuq_2}
  \mathcal{P}_{s,k}^{int,L} \le \mathcal{P}^{int}_{s,k} \le \mathcal{P}_{s,k}^{int,U},
\end{equation}
where
\begin{align}\label{ps_k_lower_2}
    \mathcal{P}_{s,k}^{int,L} &= \exp\left(-{\lambda_e\psi_k^{-1}\beta_e^{-\delta}}
    {\left(1+\xi_k\beta_{e}\right)^{1-M_k}}\right),\\
\label{ps_k_upper_2}
    \mathcal{P}_{s,k}^{int,U} &= 1-\frac{\lambda_{e}}
    {\lambda_e+\psi_k\beta_e^{\delta}}
    \left(1+\xi_k\beta_{e}\right)^{1-M_k}.
\end{align}}
\end{corollary}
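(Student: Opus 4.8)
The plan is to obtain Corollary \ref{psk_inter_corollary} as a direct specialization of Theorem \ref{psk_theorem} to the noiseless regime $N_0=0$. Since the bounds \eqref{ps_k_lower}--\eqref{ps_k_upper} were derived for arbitrary $N_0\ge 0$, the inequality chain $\mathcal{P}_{s,k}^{int,L}\le\mathcal{P}^{int}_{s,k}\le\mathcal{P}_{s,k}^{int,U}$ is inherited immediately and no new probabilistic argument is needed; the entire task is to evaluate the two integrals in closed form once the factor $e^{-\frac{\beta_e N_0}{\phi_kP_k}r^{\alpha/2}}$ in the integrands collapses to $1$.

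First I would set $N_0=0$ in \eqref{ps_k_lower}. The integral reduces to $\int_0^{\infty}e^{-\pi\psi_k\beta_e^{\delta}r}\,dr=\frac{1}{\pi\psi_k\beta_e^{\delta}}$, which is finite because $\psi_k=\sum_{j\in\mathcal{K}}\mathcal{A}_j\lambda_jC_{\alpha,M_j}(\xi_j\phi_{j,k}P_{j,k})^{\delta}>0$. Substituting this back and cancelling the common factor $\pi\lambda_e$ gives $\mathcal{P}_{s,k}^{int,L}=\exp\!\bigl(-\lambda_e\psi_k^{-1}\beta_e^{-\delta}(1+\xi_k\beta_e)^{1-M_k}\bigr)$, which is exactly \eqref{ps_k_lower_2}.

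Next I would set $N_0=0$ in \eqref{ps_k_upper}. The exponent in the integrand becomes $-\pi(\psi_k\beta_e^{\delta}+\lambda_e)r$, so $\int_0^{\infty}e^{-\pi(\psi_k\beta_e^{\delta}+\lambda_e)r}\,dr=\frac{1}{\pi(\psi_k\beta_e^{\delta}+\lambda_e)}$. Plugging this in and simplifying the prefactor $\frac{\pi\lambda_e}{(1+\xi_k\beta_e)^{M_k-1}}$ yields $\mathcal{P}_{s,k}^{int,U}=1-\frac{\lambda_e}{\lambda_e+\psi_k\beta_e^{\delta}}(1+\xi_k\beta_e)^{1-M_k}$, matching \eqref{ps_k_upper_2}.

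There is essentially no analytical obstacle here: both integrals are elementary exponential integrals and the result follows by routine algebra. The only points worth a line of justification are that the integrands remain integrable over $[0,\infty)$ (guaranteed by the strict positivity of $\psi_k\beta_e^{\delta}$ and of $\lambda_e$) and that the substitution $N_0=0$ is legitimate because Theorem \ref{psk_theorem} holds verbatim at $N_0=0$, so no limiting or dominated-convergence step is required.
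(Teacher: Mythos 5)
Your proposal is correct and coincides with the paper's own derivation: the paper obtains Corollary \ref{psk_inter_corollary} precisely by substituting $N_0=0$ into Theorem \ref{psk_theorem}, after which both integrals are elementary exponentials and the algebra you carry out reproduces \eqref{ps_k_lower_2} and \eqref{ps_k_upper_2} exactly. Nothing further is needed.
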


Corollary \ref{psk_inter_corollary} provides an approximation for $\mathcal{P}_{s,k}^{int}$ in the high secrecy probability region as
\begin{equation}\label{ps_k_bound_2} \mathcal{P}_{s,k}^{int,o}:=1-{\lambda_e\psi_k^{-1}\beta_e^{-\delta}}
    {\left(1+\xi_k\beta_{e}\right)^{1-M_k}}.
\end{equation}
 Next we establish some properties on $\mathcal{P}_{s,k}^{int,o}$, with their corresponding proofs relegated to Appendix \ref{proof_property_6_8}.
\begin{property}\label{ps_property6}
    \textit{$\mathcal{P}_{s,k}^{int,o}$ monotonically decreases in $\lambda_e$, $\tau$, and $\phi_j$, $\forall j\in\mathcal{K}$, and it increases in $\lambda_u$.}
\end{property}

\begin{property}\label{ps_property7}
    \textit{When $M_{j,k}\ll 1$, $\forall j\in\mathcal{K}\setminus k$, $\mathcal{P}_{s,k}^{int,o}$ increases in $M_k$, and decreases in $\lambda_k$.}
\end{property}

\begin{property}\label{ps_property8}
\textit{In the high $P_k$ region, $\mathcal{P}^{int,o}_{s,j}$, $\forall j\in\mathcal{K}$, increases in $P_k$;
$\mathcal{P}^{int,o}_{s,k}$ converges to a constant value as $P_k\rightarrow\infty$ and  $\lim_{P_k\rightarrow\infty}\mathcal{P}^{int,o}_{s,l}= 1$, $\forall l\neq k$.}
\end{property}

 Properties \ref{ps_property6}-\ref{ps_property8} provide insights into the secrecy probability that differ from those obtained about the connection probability.
For example, deploying more pico/femto BSs may increase  connection probabilities while reducing the secrecy probability, which implies that proper BS densities should be designed to balance link quality and secrecy.
 The above properties are further validated by the following numerical examples.

\begin{figure}[!t]
\centering
\includegraphics[width=3.0in]{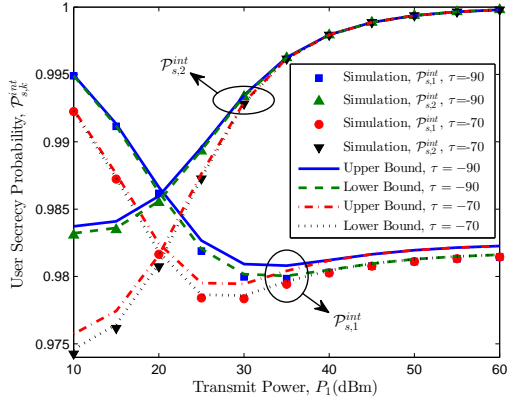}
\caption{Secrecy probability in Corollary \ref{psk_inter_corollary} in a 2-tier HCN vs. $P_1$ for different $\tau$(dBm)'s, with $\alpha=4$, $P_2=20$dBm, $\{M_1,M_2\}=\{6,4\}$, $\lambda_1=\frac{1}{\pi 400^2\mathrm{m}^2  }$,  $\{\lambda_2,\lambda_u,\lambda_e\}=
\{2,2,0.05\}\lambda_1$, $\beta_{e}=1$, and $\{\phi_1,\phi_2\}=\{0.5, 0.5\}$.}
\label{Ps_P_Tau}
\end{figure}

Fig. \ref{Ps_P_Tau} depicts secrecy probability versus $P_1$ for different values of $\tau$.
We see that, the lower bound accurately approximates the simulated value, while the upper bound becomes asymptotically tight in the high secrecy probability region.
As $P_1$ increases, $\mathcal{P}^{int}_{s,1}$ first decreases and then slowly rises to a constant value that is independent of $P_1$.
$\mathcal{P}^{int}_{s,2}$ reaches one as $P_k$ becomes large enough, verifying Property \ref{ps_property8}.
We observe that, the secrecy probabilities of both tiers increase as $\tau$ decreases, while Table I shows that connection probabilities decrease as $\tau$ decreases, as indicated in Property \ref{pc_property2}.
The access threshold $\tau$ displays a tradeoff between the connection and secrecy probabilities.
This is because a smaller $\tau$ results in more interference, which simultaneously degrades the legitimate and wiretap channels.
In other words, network interference is a double-edged sword that promotes the secrecy transmission but in turn restrains the legitimate communication.

\begin{table}
 \caption{$\mathrm{Connection~Probability~ vs.~ Secrecy~ Probability }$}
\begin{center}
  \begin{tabular}{|c|c|c|c|c|c|c|c|}
  \hline
    Probabilities &  $\mathcal{P}^{int}_{c,1}$ & $\mathcal{P}^{int}_{c,2}$ & $\mathcal{P}^{int}_{s,1}$ &  $\mathcal{P}^{int}_{s,2}$ \\\hline
    $\tau = -70$dBm &  0.9571 & 0.9477 &0.9786 &  0.9930   \\\hline
    $\tau = -90$dBm &  0.9186& 0.9073 &0.9799& 0.9936 \\
  \hline
\end{tabular}
\end{center}
\label{Pc_Ps}
\end{table}

\begin{figure}[!t]
\centering
\includegraphics[width=3.0in]{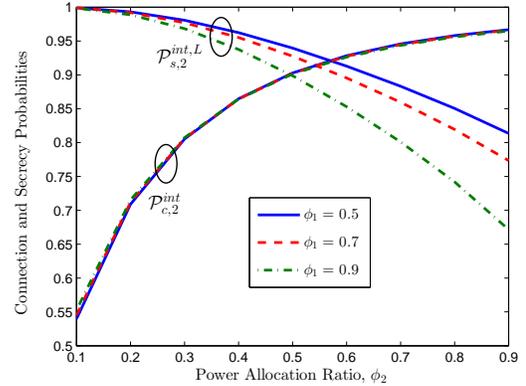}
\caption{Connection probability in \eqref{pc_inter} and secrecy probability in \eqref{ps_k_lower_2} in a 2-tier HCN vs. $\phi_2$ for different $\phi_1$'s, with $\alpha=4$, $\{P_1,P_2\}=\{30,20\}$dBm, $\{M_1,M_2\}=\{6,4\}$, $\lambda_1=\frac{1}{\pi 400^2 \mathrm{m}^2 }$,  $\{\lambda_2,\lambda_u,\lambda_e\}=
\{2,4,0.5\}\lambda_1$, $\beta_{t}=2$, $\beta_{e}=1$, and $\tau = -90$dBm.}
\label{Pc_Ps_Phi}
\end{figure}

In Fig. \ref{Pc_Ps_Phi}, we see that as $\phi_2$ increases, i.e., more power is allocated to the information signal, $\mathcal{P}^{int}_{c,2}$ increases and $\mathcal{P}^{int}_{s,2}$ decreases.
We should design the power allocation to strike a better balance between reliability and secrecy.
In addition, although a smaller $\phi_1$ rarely affects $\mathcal{P}^{int}_{c,2}$, it significantly increases $\mathcal{P}^{int}_{s,2}$, which highlights the validity of the artificial-noise method.

\section{Network-wide Secrecy Throughput}
In this section, we investigate the network-wide secrecy throughput of the HCN, in terms of the \emph{secrecy transmission capacity} \cite{Zhou2011Throughput}, \cite{Zhang2013Enhancing}, which is defined as the achievable rate of successful transmission of secret messages per unit area subject to both connection and secrecy probability constraints.
Mathematically, the network-wide secrecy throughput, with a connection probability constraint $\mathcal{P}_{c,k}(\beta_{t,k})=\varrho$ and a secrecy probability constraint $\mathcal{P}_{s,k}(\beta_{e,k})=\epsilon$ for $k\in\mathcal{K}$, is given by
\begin{align}\label{st}
\mathcal{T} &
=\sum_{k\in\mathcal{K}}
\lambda_k\mathcal{A}_k\varrho\mathcal{R}^*_{s,k}
=\sum_{k\in\mathcal{K}}
\lambda_k\mathcal{A}_k\varrho
    \left[\mathcal{R}^*_{t,k} - \mathcal{R}^*_{e,k}\right]^{+}\nonumber\\
&= \sum_{k\in\mathcal{K}}
\lambda_k\mathcal{A}_k\varrho
     \left[\log\left(\frac{1+\beta^*_{t,k}}
    {1+\beta^*_{e,k}}\right)\right]^{+},
\end{align}
where $\mathcal{R}^*_{s,k}=\left[\mathcal{R}^*_{t,k} - \mathcal{R}^*_{e,k}\right]^{+}$, $\mathcal{R}^*_{t,k}=\log (1+\beta^*_{t,k})$ and $\mathcal{R}^*_{e,k}= \log (1+\beta^*_{e,k})$ are the secrecy, codeword and redundant rates for tier $k$, with
$\beta^*_{t,k}$ and $\beta^*_{e,k}$ the unique roots of the equations $\mathcal{P}_{c,k}(\beta_{t,k})=\varrho$
and $\mathcal{P}_{s,k}(\beta_{e,k})=\epsilon$, respectively.
Note that, if $\mathcal{R}^*_{t,k} - \mathcal{R}^*_{e,k}$ is negative, the connection and secrecy probability constraints can not be satisfied simultaneously, and transmissions should be suspended.
When system designers can control the connection and secrecy probability constraints, \eqref{st} can be used to find the optimal values of $\varrho$ and $\epsilon$ to maximize $\mathcal{T}$.
    In the following, we calculate $\beta^*_{t,k}$ and $\beta^*_{e,k}$ from $\mathcal{P}_{c,k}(\beta_{t,k})=\varrho$
and $\mathcal{P}_{s,k}(\beta_{e,k})=\epsilon$, respectively.
For convenience, we consider an interference-limited HCN.

Due to the complicated expression of $\mathcal{P}_{c,k}^{int}$ in \eqref{pc_inter}, we are unable to derive an expression for $\beta^*_{t,k}$ from $\mathcal{P}^{int}_{c,k}(\beta_{t,k})=\varrho$.
However, since $\mathcal{P}^{int}_{c,k}(\beta_{t,k})$ is obviously a monotonically decreasing function of $\beta_{t,k}$, we can efficiently calculate $\beta^*_{t,k}$ that satisfies $\mathcal{P}^{int}_{c,k}(\beta_{t,k})=\varrho$ using the bisection method.

To guarantee a high level of secrecy, the secrecy probability $\epsilon$ must be large, which allows us to use \eqref{ps_k_bound_2} to calculate $\beta^*_{e,k}$.
For the case $M_k\ge 3$, we can only numerically calculate $\beta^*_{e,k}$ that satisfies $\mathcal{P}^{int}_{s,k}(\beta_{e,k})=\epsilon$ using the bisection method.
Fortunately, when $M_k=2$ or $M_k\gg1$, we can provide closed-form expressions of $\mathcal{\beta}^*_{e,k}$, with corresponding proof relegated to Appendix \ref{proof_proposition_1_2}.

\begin{proposition}\label{beta_e_proposition1}
\textit{In the large $\epsilon$ regime with $\alpha=4$ and $M_k=2$, the root of $\mathcal{P}^{int}_{s,k}(\beta_{e,k})=\epsilon$ is given by
\begin{align}\label{redundancy2}
    \mathcal{\beta}^{o}_{e,k}=
    \left(\frac{{3{\left(\frac{\sqrt{\xi_k}\lambda_e}
    {2(1-\epsilon)\psi_k}
    +\sqrt{\frac{{\xi_k}\lambda_e^2}
    {4(1-\epsilon)^2\psi_k^2}+\frac{1}{27}}
    \right)^{{2}/{3}}}}-1}
    {3\sqrt{\xi_k}\left(\frac{\sqrt{\xi_k}\lambda_e}
    {2(1-\epsilon)\psi_k}
    +\sqrt{\frac{{\xi_k}\lambda_e^2}
    {4(1-\epsilon)^2\psi_k^2}+\frac{1}{27}}
    \right)^{{1}/{3}} }\right)^2.
\end{align}}
\end{proposition}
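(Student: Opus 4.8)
The plan is to solve the fixed-point equation $\mathcal{P}^{int}_{s,k}(\beta_{e,k})=\epsilon$ in closed form by reducing it to a cubic. Since the proposition concerns the large-$\epsilon$ (high secrecy probability) regime, I would first replace $\mathcal{P}^{int}_{s,k}$ by the tight approximation $\mathcal{P}^{int,o}_{s,k}$ of \eqref{ps_k_bound_2}, which is justified by Corollary~\ref{psk_inter_corollary}. Specializing $\alpha=4$ gives $\delta=\tfrac12$, and $M_k=2$ turns the factor $(1+\xi_k\beta_{e,k})^{1-M_k}$ into $(1+\xi_k\beta_{e,k})^{-1}$. The equation $\mathcal{P}^{int,o}_{s,k}(\beta_{e,k})=\epsilon$ then rearranges to $\lambda_e=(1-\epsilon)\psi_k\sqrt{\beta_{e,k}}\,(1+\xi_k\beta_{e,k})$.

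The key observation is that the exponent $\delta=\tfrac12$ makes $t\triangleq\sqrt{\beta_{e,k}}\ge 0$ the natural variable: the equation becomes the cubic $\xi_k t^3+t-\tfrac{\lambda_e}{(1-\epsilon)\psi_k}=0$. Because its left-hand side is strictly increasing on $[0,\infty)$ and runs from $-\tfrac{\lambda_e}{(1-\epsilon)\psi_k}<0$ at $t=0$ to $+\infty$, there is a unique nonnegative root, hence a unique admissible $\beta_{e,k}\ge 0$. I would then rescale $t=s/\sqrt{\xi_k}$ to normalize the linear coefficient, obtaining the depressed cubic $s^3+s-\tfrac{\sqrt{\xi_k}\lambda_e}{(1-\epsilon)\psi_k}=0$, i.e. $s^3+ps+q=0$ with $p=1$ and $q=-\tfrac{\sqrt{\xi_k}\lambda_e}{(1-\epsilon)\psi_k}<0$.

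Next I would apply Cardano's formula. Since $p=1>0$, the discriminant $\tfrac{q^2}{4}+\tfrac{p^3}{27}>0$, confirming a single real root $s=\sqrt[3]{-\tfrac q2+\sqrt{\tfrac{q^2}{4}+\tfrac1{27}}}+\sqrt[3]{-\tfrac q2-\sqrt{\tfrac{q^2}{4}+\tfrac1{27}}}$. Writing $A\triangleq-\tfrac q2+\sqrt{\tfrac{q^2}{4}+\tfrac1{27}}=\tfrac{\sqrt{\xi_k}\lambda_e}{2(1-\epsilon)\psi_k}+\sqrt{\tfrac{\xi_k\lambda_e^2}{4(1-\epsilon)^2\psi_k^2}+\tfrac1{27}}$ and using that the product of the two cube-root radicands equals $(-\tfrac q2)^2-(\tfrac{q^2}{4}+\tfrac1{27})=-\tfrac1{27}$, the second cube root simplifies to $-\tfrac1{3A^{1/3}}$, so $s=A^{1/3}-\tfrac1{3A^{1/3}}=\tfrac{3A^{2/3}-1}{3A^{1/3}}$ (note $s>0$ because $s^3+s=-q>0$, so $3A^{2/3}-1>0$). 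Back-substituting $\beta_{e,k}=t^2=(s/\sqrt{\xi_k})^2$ yields exactly \eqref{redundancy2}.

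The hard part will be not any single calculation but spotting the right two substitutions: $t=\sqrt{\beta_{e,k}}$, which relies on the special value $\delta=\tfrac12$ to produce a polynomial at all, and the rescaling $t=s/\sqrt{\xi_k}$, which is what makes the generic Cardano expression collapse to the compact form of \eqref{redundancy2} via the product-of-roots identity. Along the way I would also need to keep the branch choices straight (taking real cube roots) and confirm positivity of $3A^{2/3}-1$ so that $\beta^o_{e,k}$ is a genuine nonnegative SINR threshold, which is immediate once $s>0$ is established.
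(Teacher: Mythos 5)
Your proposal is correct and follows essentially the same route as the paper: specialize \eqref{ps_k_bound_2} to $\delta=\tfrac12$, $M_k=2$, substitute the square root of the SINR to obtain a depressed cubic, and solve by Cardano's formula. Your bookkeeping is in fact slightly more careful than the paper's one-line sketch (which writes the cubic in $x=(\xi_k\beta_{e,k})^{1/2}$ with constant term $-\lambda_e/(\psi_k(1-\epsilon))$, missing the $\sqrt{\xi_k}$ factor), and your rescaled cubic $s^3+s-\sqrt{\xi_k}\lambda_e/((1-\epsilon)\psi_k)=0$ together with the product-of-radicands identity is exactly what reproduces the stated form of \eqref{redundancy2}.
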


\begin{proposition}\label{beta_e_proposition2}
\textit{In the large $\epsilon$ regime, as $M_k\rightarrow\infty$, the root of $\mathcal{P}^{int}_{s,k}(\beta_{e,k})=\epsilon$ is given by
\begin{equation}\label{redundancy2}
  \mathcal{\beta}^{\star}_{e,k}={\delta}
  \frac{\phi_k}{1-\phi_k}
  \ln  \left(\frac{\frac{\alpha}{2}
  \frac{1-\phi_k}{\phi_k}\left(\frac{ \psi_k(1-\epsilon)}{\lambda_e}\right)^{-\frac{\alpha}{2}}}
  {\mathcal{W}\left(\frac{\alpha}{2}
  \frac{1-\phi_k}{\phi_k}\left(\frac{ \psi_k(1-\epsilon)}{\lambda_e}\right)^{-\frac{\alpha}{2}}\right)}\right),
\end{equation}
where $\mathcal{W}(x)$ is the
Lambert-$W$ function \cite[Sec. 4-13]{Olver2010NIST}.}
\end{proposition}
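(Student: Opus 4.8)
The plan is to work from the high-secrecy-probability surrogate $\mathcal{P}^{int,o}_{s,k}$: since the statement restricts to the large $\epsilon$ regime, I replace $\mathcal{P}^{int}_{s,k}$ by $\mathcal{P}^{int,o}_{s,k}$ of \eqref{ps_k_bound_2} and solve $\mathcal{P}^{int,o}_{s,k}(\beta_{e,k})=\epsilon$. Rearranging the definition of $\mathcal{P}^{int,o}_{s,k}$ gives the defining equation
\[
\frac{\psi_k(1-\epsilon)}{\lambda_e}=\beta_{e,k}^{-\delta}\left(1+\xi_k\beta_{e,k}\right)^{1-M_k}.
\]
Since $\beta_{e,k}^{-\delta}$ and $(1+\xi_k\beta_{e,k})^{1-M_k}$ are each strictly decreasing in $\beta_{e,k}>0$ (note $1-M_k<0$), the right-hand side decreases strictly from $+\infty$ to $0$, so this equation has a unique positive root; it therefore suffices to exhibit one solution.

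Next I would pass to the limit $M_k\to\infty$. Writing $\xi_k=\frac{\phi_k^{-1}-1}{M_k-1}=\frac{a}{M_k-1}$ with $a\triangleq\frac{1-\phi_k}{\phi_k}$, and using $\lim_{n\to\infty}\left(1+\frac{a\beta}{n}\right)^{-n}=e^{-a\beta}$, the defining equation collapses to
\[
\frac{\psi_k(1-\epsilon)}{\lambda_e}=\beta_{e,k}^{-\delta}\,e^{-a\beta_{e,k}}.
\]
I then solve this transcendental equation via a Lambert-$W$ substitution. Setting $c\triangleq\frac{\psi_k(1-\epsilon)}{\lambda_e}$ and raising $c\,\beta_{e,k}^{\delta}=e^{-a\beta_{e,k}}$ to the power $1/\delta=\alpha/2$ yields $c^{\alpha/2}\beta_{e,k}=e^{-\frac{\alpha}{2}a\beta_{e,k}}$; with $t\triangleq\frac{\alpha}{2}a\,\beta_{e,k}$ this becomes $t\,e^{t}=\frac{\alpha}{2}a\,c^{-\alpha/2}$. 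The argument on the right is positive and $\beta_{e,k}\ge 0$ forces $t\ge 0$, so the principal branch applies and $t=\mathcal{W}\!\big(\frac{\alpha}{2}\frac{1-\phi_k}{\phi_k}\big(\frac{\psi_k(1-\epsilon)}{\lambda_e}\big)^{-\alpha/2}\big)$. Undoing the substitution gives $\beta_{e,k}=\frac{\delta}{a}t=\delta\frac{\phi_k}{1-\phi_k}t$, and finally the identity $\mathcal{W}(x)=\ln\!\big(x/\mathcal{W}(x)\big)$ (valid for $x>0$, since then $\mathcal{W}(x)e^{\mathcal{W}(x)}=x$ with $\mathcal{W}(x)>0$) rewrites $t$ in the stated form.

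The step I expect to be the main obstacle is justifying the limit interchange: that the root of the finite-$M_k$ equation genuinely converges to the root of the limiting equation, not merely that the equations converge formally. I would dispatch this by observing that the left-hand side is a fixed positive constant while the right-hand side converges, uniformly on compact $\beta_{e,k}$-intervals, to the continuous strictly decreasing function $\beta\mapsto\beta^{-\delta}e^{-a\beta}$, so the unique crossing point depends continuously on the parameter $1/(M_k-1)$ and converges as $M_k\to\infty$; the remaining algebra (the $1/\delta$-power step and the Lambert-$W$ inversion) is routine. The companion case $M_k=2$ of Proposition~\ref{beta_e_proposition1} starts from the same identity but, with $\alpha=4$, reduces to a depressed cubic in $\sqrt{\beta_{e,k}}$ handled by Cardano's formula rather than by $\mathcal{W}$.
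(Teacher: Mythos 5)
Your proposal is correct and follows essentially the same route as the paper: replace $\mathcal{P}^{int}_{s,k}$ by the high-$\epsilon$ surrogate $\mathcal{P}^{int,o}_{s,k}$, use $\lim_{M\to\infty}(1+x/M)^{-M}=e^{-x}$ to collapse $(1+\xi_k\beta_{e,k})^{1-M_k}$ to $e^{-\frac{1-\phi_k}{\phi_k}\beta_{e,k}}$, and invert the resulting transcendental equation via Lambert-$W$. Your substitution $t\,e^{t}=\theta$ and the paper's $y^{y}=e^{\theta}$ are trivially equivalent parametrizations of the same inversion, and your added remarks on uniqueness of the root and the limit interchange are extra rigor the paper omits rather than a different argument.
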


To demonstrate the accuracy of $\mathcal{R}^{\star}_{e,k}\triangleq \log (1+\beta^{\star}_{e,k})$ on $\mathcal{R}^{*}_{e,k}$, we define $\Delta \mathcal{R}_{e,k}\triangleq \frac{|\mathcal{R}^{\star}_{e,k}-\mathcal{R}^*_{e,k}|}
    {\mathcal{R}^*_{e,k}}$.
   Numerically, we obtain $\Delta \mathcal{R}_{e,1}=0.0462$ when $M_1=4$, and $\Delta \mathcal{R}_{e,1}= 0.0062$ when $M_1=20$.
    This suggests that $\mathcal{R}^{\star}_{e,k}$ becomes very close to $\mathcal{R}^{*}_{e,k}$ for a large enough $M_k$ (e.g., $M_k\ge20$).

Substituting $\beta^*_{t,k}$ and $\beta^*_{e,k}$ into \eqref{st}, we obtain an expression of $\mathcal{T}$.
\begin{figure}[!t]
\centering
\includegraphics[width=3.0in]{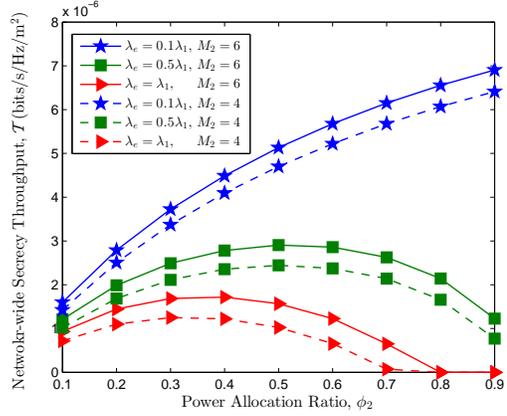}
\caption{Secrecy throughput in a 2-tier HCN vs. $\phi_2$ for different $\lambda_e$'s and $M_2$'s, $\{P_1,P_2\}=\{30,10\}$dBm, $M_1=6$, $\phi_1 = 0.8$, $\{\lambda_2,\lambda_u\} = \{2,4\}\lambda_1$, $\tau = -90$dBm, $\varrho=0.9$, and $\epsilon = 0.95$.}
\label{TH_Phi_Le}
\end{figure}
Fig. \ref{TH_Phi_Le} illustrates the network-wide secrecy throughput $\mathcal{T}$ versus $\phi_2$ for different values of $\lambda_e$ and $M_2$.
As expected, using more transmit antennas always increases $\mathcal{T}$.
We observe that, for a small $\lambda_e$, allocating more power to the information signal (increasing $\phi_2$) improves $\mathcal{T}$.
However, for a larger $\lambda_e$, $\mathcal{T}$ first increases and then decreases as $\phi_2$ increases, and even vanishes for too large a $\phi_2$ (e.g., $\lambda_e=\lambda_1$, and $\phi_2=0.8$).
There exists an optimal $\phi_2$ that maximizes $\mathcal{T}$, which can be numerically calculated by taking the maximum of $\mathcal{T}$.
We also observe that the optimal $\phi_2$ decreases as $\lambda_e$ increases, i.e., more power should be allocated to the artificial noise to achieve the maximum $\mathcal{T}$.

From the analysis in previous sections, we find that the access threshold triggers a non-trivial tradeoff between link quality and network-wide secrecy throughput.
On one hand, setting a small access threshold improves spatial reuse by enabling more communication links per unit area, potentially increasing throughput; 
setting a small access threshold also benefits secrecy since more BS are activated increasing artificial noise to impair eavesdroppers.
On the other hand, the additional amount of interference caused by the increased concurrent transmissions (a small value of $\tau$) degrades ongoing legitimate transmissions, decreasing the probability of successfully connecting the BS-UE pairs.
In this regard, neither too large nor too small an access threshold can yield a high secrecy throughput.
As shown in Fig. \ref{TH_TAU_LE_LA}, $\mathcal{T}$ first increases and then decreases as $\tau$ increases.
Only by a proper choice of the access threshold, can we achieve a high network-wide secrecy throughput.

In view of the quasi-concavity of $\mathcal{T}$ w.r.t. $\tau$ indicated in Fig. \ref{TH_TAU_LE_LA}, we can seek out the optimal $\tau$ that maximizes $\mathcal{T}$ using the gold section method\footnote{Due to the complicated expression of $\mathcal{T}$ w.r.t. $\tau$, more efficient methods for optimizing $\tau$ will be left for future research.}.
Furthermore, combined with the asymptotic analysis on $\mathcal{P}^{int}_{c,k}$ in Sec. III-C and the expression of $\mathcal{P}_{s,k}^{int}$ in
\eqref{ps_k_bound_2}, we directly provide the following asymptotic behaviors of $\mathcal{T}$ as $\tau$ goes to zero and as $\tau$ goes to infinity:
1) \textit{When all tiers share the same values of $M$ and $\phi$, and $\lambda_u\gg \lambda_j$, $\forall j\in\mathcal{K}$, $\mathcal{T}$ converges to a constant value as $\tau\rightarrow0$};
2) \textit{$\mathcal{T}\rightarrow 0$ as $\tau\rightarrow\infty$.}

\begin{figure}[!t]
\centering
\includegraphics[width=3.0in]{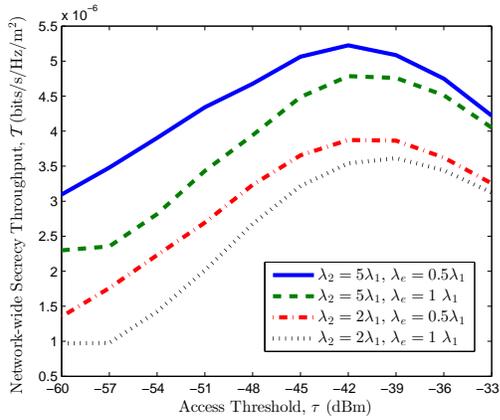}
\caption{Network-wide secrecy throughput in a 2-tier HCN vs. $\tau$ for different $\lambda_2$'s and $\lambda_e$'s, with $\{P_1,P_2\}=\{30,10\}$dBm, $\{M_1,M_2\}=\{4,4\}$, $\{\phi_1,\phi_2\}=\{0.5,0.5\}$, $\lambda_u = 10\lambda_1$, $\varrho=0.95$, and $\epsilon = 0.95$.}
\label{TH_TAU_LE_LA}
\end{figure}

\begin{figure}[!t]
\centering
\includegraphics[width=3.0in]{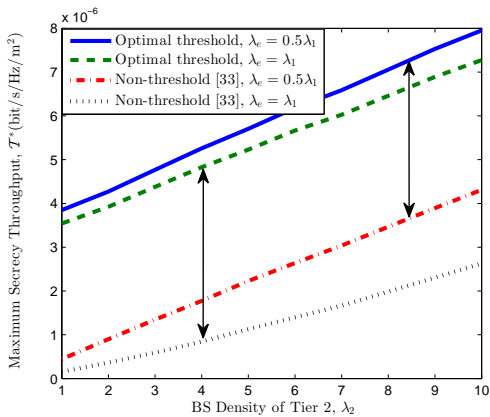}
\caption{Network-wide secrecy throughput in a 2-tier HCN vs. $\tau$ for different $\lambda_2$'s and $\lambda_e$'s, with $\{P_1,P_2\}=\{30,10\}$dBm, $\{M_1,M_2\}=\{6,4\}$, $\{\phi_1,\phi_2\}=\{0.5,0.5\}$, $\lambda_u = 10\lambda_1$, $\varrho=0.95$, and $\epsilon = 0.95$.}
\label{TS_COMPARE}
\end{figure}

Fig. \ref{TS_COMPARE} compares the network-wide secrecy throughput obtained under the optimal access threshold with that under a non-threshold policy \cite{Jo2012Heterogeneous}.
Obviously, our threshold-based policy significantly improves the secrecy throughput performance of the HCN.
We also observe that deploying more picocells still benefits network-wide secrecy throughput, even through it increases network interference.
 This is because of cell densification and the fact that the increased interference also degrades the wiretap channels.

\subsection{Average User Secrecy Throughput}
Given that each BS adopts TDMA with equal time slots allocated to the associated UEs in a round-robin manner, here we investigate the average user secrecy throughput, which is defined as
\begin{equation}\label{user_st}
    \mathcal{T}_{u,k}\triangleq \frac{\mathcal{T}_k}{\mathcal{N}_k},
\end{equation}
where $\mathcal{T}_k\triangleq
\mathcal{A}_k\varrho\left[\mathcal{R}^*_{t,k} - \mathcal{R}^*_{e,k}\right]^{+}$ denotes the secrecy transmission capacity of a cell in tier $k$, and $\mathcal{N}_k = \frac{\lambda_u}{\lambda_k}\mathcal{S}_k$ denotes the corresponding cell load.
From the perspective of a UE, the network-wide secrecy throughput can be alternatively expressed as
\begin{equation}\label{T_u}
  \mathcal{T}_u = \sum_{k\in\mathcal{K}} \lambda_u\mathcal{T}_{u,k}\mathcal{S}_k.
\end{equation}
By substituting \eqref{user_st} into \eqref{T_u}, we see that $\mathcal{T}_u=\mathcal{T}$, which is also as expected.

A UE may be interested in the minimum level of secrecy throughput it can achieve.
To this end, we evaluate the minimum average secrecy throughput over all tiers, which is defined as
\begin{equation}\label{min_st}
  \mathcal{T}_{min} \triangleq \min_{k\in\mathcal{K}} \mathcal{T}_{u,k}.
\end{equation}
Fig. \ref{MTH_M_Tau_Lu} shows how $\mathcal{T}_{min}$ depends on $M_1$, $\lambda_u$, and $\lambda_e$, respectively.
Obviously, average user secrecy throughput deteriorates as the density of Eves increases.
 This is ameliorated by adding more transmit antennas at BSs.
In addition, as $\lambda_u$ increases, the number of UEs sharing limited resources increases, which results in a decrease in per user secrecy throughput.

\begin{figure}[!t]
\centering
\includegraphics[width=3.0in]{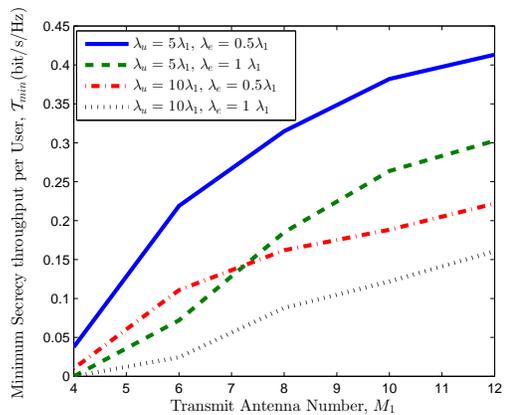}
\caption{Minimum secrecy throughput per user in a 2-tier HCN vs. $M_1$ for different $\lambda_u$'s and $\lambda_e$'s, with $\alpha=4$, $\{P_1,P_2\}=\{30,10\}$dBm, $M_2=4$, $\lambda_1=\frac{1}{\pi 400^2\mathrm{m}^2  }$,  $\lambda_2=2\lambda_1$, $\{\phi_1,\phi_2\}=\{0.5,0.5\}$, $\tau=-60$dBm, $\varrho=0.95$, and $\epsilon = 0.95$.}
\label{MTH_M_Tau_Lu}
\end{figure}

\section{Conclusions}
    This paper comprehensively studies the PLS of HCNs where the locations of all BSs, UEs and Eves are modeled as independent homogeneous PPPs.
    We first propose a mobile association policy based on the truncated ARSP and derive the tier association and BS activation probabilities.
    We then analyze the connection and secrecy probabilities of the artificial-noise-aided secure transmission.
    For connection probability, we provide a new accurate integral formula as well as an analytically tractable expression for the interference-limited case.
    For secrecy probability, we obtain closed-form expressions for both upper and lower bounds, which are approximate to the exact values at the high secrecy probability regime.
    Constrained by the connection and secrecy probabilities, we evaluate the network-wide secrecy throughput and also the minimum secrecy throughput per user.
    Numerical results are presented to verify our theoretical analysis.

\appendix
\subsection{Proof of Lemma \ref{association_probability_lemma}}
\label{association_probability_proof}
    Consider a typical UE $\textsf{U}_o$, we know from \eqref{arsp} and \eqref{sk_def} that $\textsf{U}_o$ is associated with tier $k$ only when $r_k\le D_k$ and $ \hat{P}_k>\hat{P}_j, \forall j\in\mathcal{K}\setminus k$ simultaneously hold.
    Therefore, $\mathcal{S}_k$ can be calculated as
    \begin{equation}\label{ass_pro_appendix}
        \mathcal{S}_k
        =\int_0^{D_k}\prod_{j\in\mathcal{K}\setminus k} \mathbb{P}\left\{\hat{P}_k>\hat{P}_j|r_k \right\}f_{r_k}(r)dr,
    \end{equation}
    where $f_{r_k}(r)=2\pi\lambda_k re^{-\pi\lambda_kr^2}$ \cite[Theorem 1]{Haenggi08Distance}.
    The term $\mathbb{P}\left\{\hat{P}_k>\hat{P}_j|r_k \right\}$ can be calculated as
   \begin{align}\label{Pr_kj}
    \mathbb{P}&\left\{\hat{P}_k>\hat{P}_j|r_k \right\}
\stackrel{\mathrm{(c)}}=
        \mathbb{P}\left\{r_j>\min\left\{D_j,\left({P}_{j,k}
        {M}_{j,k}\right)^{\frac{1}{\alpha}}r_k\right\}|r_k\right\}\nonumber\\
        &\stackrel{\mathrm{(d)}}=
        \mathbb{P}\left\{r_j>\left({P}_{j,k}
        {M}_{j,k}\right)^{\frac{1}{\alpha}}r_k\right\}\nonumber\\
        &=\mathbb{P}\left\{\textit{No BS in tier j is inside } \mathcal{B}\left(o, \left({P}_{j,k}
        {M}_{j,k}\right)^{\frac{1}{\alpha}}r_k\right)\right\}\nonumber\\
        &\stackrel{\mathrm{(e)}} = e^{-\pi\lambda_j \left({P}_{j,k}{M}_{j,k}\right)^{{\delta}}r_k^2},
   \end{align}
        where (c) follows from the fact that $\hat{P}_k>\hat{P}_j$ holds if $r_j>D_j$ or $\frac{P_jM_j}{r_j^{\alpha}}<\frac{P_kM_k}{r_k^{\alpha}}
        \Rightarrow r_j>\left({P}_{j,k}
        {M}_{j,k}\right)^{\frac{1}{\alpha}}r_k$,
        (d) holds for $D_j = \left(\frac{P_jM_j}{\tau}\right)
^{\frac{1}{\alpha}} = \left({P}_{j,k}{M}_{j,k}\right)
^{\frac{1}{\alpha}}D_k\ge \left({P}_{j,k}{M}_{j,k}\right)
^{\frac{1}{\alpha}}r_k$, and (e) is obtained from the basic nature of PPP \cite{Andrews2011Tractable}.
    Substituting \eqref{Pr_kj} into \eqref{ass_pro_appendix} and calculating the integral, we complete the proof.

\subsection{Proof of Lemma \ref{activation_probability_lemma}}
\label{activation_probability_proof}
Without lose of generality, we consider a tagged BS $\textsf{B}_{z,k}$.
Let $\Phi_u^{o}\triangleq\Phi_u\bigcap \mathcal{B}(z,D_k)$, and the BS idle probability of tier $k$, i.e., the complementary probability $\bar{\mathcal{A}}_k\triangleq 1-\mathcal{A}_k$, can be calculated as follows
    \begin{align}\label{A_k_inverse}
        \bar{\mathcal{A}}_k
        &=\mathbb{E}_{\Phi_u}\left[\prod_{x\in
        \Phi_u^{o}} \mathbb{P}\{\textsf{U}_x \textit{ is not associated with }\textsf{B}_{z,k}\}\right]\nonumber\\
       & =\mathbb{E}_{\Phi_u}\left[\prod_{x\in\Phi_u^{o}} \mathbb{P}\left\{\frac{P_kM_k}{r_{zx}^{\alpha}}<\max_{j\in\mathcal{K}
        }\hat{P}_j \right\}\right]\nonumber\\
       & \stackrel{\mathrm{(f)}}
       =\mathbb{E}_{\Phi_u}\left[\prod_{x\in\Phi_u^{o}}
        1-e^{-\pi\Xi\left(P_kM_k\right)^{-{\delta}}
    r_{zx}^2}\right]\nonumber\\
    &\stackrel{\mathrm{(g)}}=\exp\left(-2\pi\lambda_u\int_0^{D_k}
  e^{-\pi\Xi\left(P_kM_k\right)^{-{\delta}}r^2}rdr\right),
    \end{align}
    where (f) is obtained from \eqref{Pr_kj} and (g) is derived by using the probability generating functional lemma (PGFL) over PPP \cite{Stoyan1996Stochastic}.
    Solving the integral term in \eqref{A_k_inverse} yields $\mathcal{A}_k =1-\bar{\mathcal{A}}_k$, which is given in \eqref{A_k}.

\subsection{Proof of Property \ref{BS_density_property}}
\label{BS_density_proof}
Since $\mathcal{A}_k = 1-e^{-\frac{\lambda_u}{\lambda_k}\mathcal{S}_k}$, the first derivative of $\lambda_k^{o}=\mathcal{A}_k\lambda_k$ on $\lambda_k$ can be given by
    \begin{equation}\label{d_lk}
        \frac{d\lambda_k^{o}}{d\lambda_k}
        =1-e^{-\frac{\lambda_u}{\lambda_k}\mathcal{S}_k}
        \left(1+\frac{\lambda_u}{\lambda_k}\mathcal{S}_k
        -\lambda_u \frac{d\mathcal{S}_k}{d\lambda_k}\right).
    \end{equation}
    Let $\Psi_k\triangleq \frac{(P_kM_k)^{{\delta}}}{\Xi}
    \left(1-e^{-\pi\tau^{-{\delta}}\Xi}\right)$ and thus $\mathcal{S}_k=\lambda_k\Psi_k$.
We easily see from \eqref{S_k} that $\mathcal{S}_k$ monotonically increases on $\lambda_k$, which means $\frac{d\mathcal{S}_k}{d\lambda_k}=\Psi_k
    +\lambda_k\frac{d\Psi_k}{d\lambda_k}>0\Rightarrow
    -\frac{d\Psi_k}{d\lambda_k}<\frac{\Psi_k}{\lambda_k}$.
    Substituting this inequality into \eqref{d_lk} yields $\frac{d\lambda_k^{o}}{d\lambda_k}>1-
    e^{-\frac{\lambda_u}{\lambda_k}\mathcal{S}_k}
    \left(1+\frac{\lambda_u}{\lambda_k}\mathcal{S}_k\right)$. By employing $1+x<e^x$, we obtain $\frac{d\lambda_k^{o}}{d\lambda_k}>0$, i.e., $\lambda_k^{o}$ monotonically increases on $\lambda_k$.
    As $\lambda_k\rightarrow\infty$, we have $\mathcal{S}_k\rightarrow 1$, and then $\lambda_k^{o} \rightarrow\lim_{\lambda_k\rightarrow \infty}\lambda_k
    \left(1-e^{-\frac{\lambda_u}{\lambda_k}}\right)=\lambda_u$.

\subsection{Proof of Theorem \ref{pck_theorem}}
\label{pck_theorem_proof}
Define $x_p \triangleq \frac{(-1)^p s^p}
    {p!}
    \mathcal{L}^{(p)}_{I_o}(s)$, and $\mathbf{x}_{M_k-1}\triangleq[x_1,x_2,\cdots,x_{M_k-1}]^{\mathrm{T}}$.
    $\mathcal{P}_{c,k}$ in \eqref{pc_k1} can be rewritten as
\begin{equation}\label{pc_app}
  \mathcal{P}_{c,k}=\sum_{m=0}^{M_k-1}\mathbb{E}_{R_k}\left[
  e^{-sN_0}\sum_{p=0}^{m}\frac{(sN_0)^{m-p}}{(m-p)!}x_p\right].
\end{equation}

Due to the independence of $I_{io}$ and $I_{jo}$ for $i\neq j$, the Laplace transform of $I_o$ is given by
\begin{align}\label{La_Io_app}
    \mathcal{L}_{I_o}(s) = \mathbb{E}_{I_o}\left[e^{-sI_o}\right]
    =\prod_{j\in\mathcal{K}} \mathcal{L}_{I_{jo}}(s).
\end{align}

Let $r_{jo}\triangleq
\left({P_{j,k}M_{j,k}}\right)^{\frac{1}{\alpha}}R_k$ and $P_{jz}\triangleq\phi_jP_j \left(|\mathbf{h}_{zo}^{\mathrm{T}}\mathbf{w}_{z}|^2
+
\xi_j\|\mathbf{h}_{zo}^{\mathrm{T}}
\mathbf{W}_{z}\|^2\right)$.
Using \cite[(8)]{Haenggi2009Stochastic}, $\mathcal{L}_{I_o}(s)$ can be calculated as
\begin{align}\label{La_jo_app}
\mathcal{L}_{I_{o}}(s)&= \prod_{j\in\mathcal{K}} \mathbb{E}_{\Phi_j}
    \left[\exp\left(-s\sum_{z\in\Phi^o_j\setminus \mathcal{B}(o, r_{jo})}
    P_{jz}r_{zo}^{-\alpha}
    \right)\right]\nonumber\\
    &=\exp\left(-\pi\sum_{j\in\mathcal{K}}\lambda_j^o\int^{\infty}_{r_{jo}^2}
    \left(1-\varpi\left(P_{jz}\right)\right)dr\right),
\end{align}
where $\varpi(P_{jz})=\int_0^{\infty}
    e^{-sxr^{-\alpha}}f_{P_{jz}}(x)dx $ can be obtained by invoking $f_{P_{jz}}(x)$ in
    \cite[Lemma 1]{Zhang2013Enhancing}
\begin{align}\label{Epz}
  \varpi(P_{jz})=\begin{cases}
    ~{\left(1+\omega
    r^{-\frac{\alpha}{2}}\right)^{-M_j}}, & \xi_j=1,\\
    ~\frac{\left(1-\xi_j\right)^{1-M_j}}
    {1+\omega r^{-\frac{\alpha}{2}}}
    -\sum\limits_{n=0}^{M_j-2}
    \frac{\xi_j
    \left(1-\xi_j\right)^{1-M_j+n}}
    {\left(1+\xi_j\omega r^{-\frac{\alpha}{2}}
    \right)^{n+1}}, &\xi_j\neq 1.
    \end{cases}
\end{align}
with $\omega\triangleq \phi_jP_j s$.
Next, we calculate $\mathcal{L}^{(p)}_{I_o}(s)$.
We consider the case $\phi_j\neq\frac{1}{M_j}$, and the results for $\phi_j=\frac{1}{M_j}$ can be easily obtained in a similar way.
We present $\mathcal{L}^{(p)}_{I_o}(s)$ in the following recursive form
\begin{align}\label{p_derivative}
    &\mathcal{L}^{(p)}_{I_o}(s)=\pi\sum_{j\in\mathcal{K}}\lambda_j^o
\sum_{i=0}^{p-1}\binom{p-1}{i}
    \frac{(-1)^{p-i}}{(1-\xi_j)^{M_j-1}}\mathcal{L}^{(i)}_{I_o}(s)\times\nonumber\\
    &\qquad\Bigg\{
    \int^{\infty}_{r_{jo}^2}\Bigg(\frac{(p-i)!\left(
    \phi_jP_jr^{-\frac{\alpha}{2}}\right)^{p-i}}{\left(
    1+\omega r^{-\frac{\alpha}{2}}\right)^{    p-i+1}}-\nonumber\\
    &\quad\sum_{n=0}^{M_j-2}\frac{\xi_j(n+p-i)!\left(
    \xi_j\phi_jP_jr^{-\frac{\alpha}{2}}\right)^{p-i}}
    {n!(1-\xi_j)^{-n}\left(
    1+\xi_j\omega r^{-\frac{\alpha}{2}}\right)^{n+p-i+1}}
    \Bigg)dr\Bigg\}.
\end{align}
Using the variable transformation $r^{-\frac{\alpha}{2}}\rightarrow v$, and plugging $\mathcal{L}^{(p)}_{I_o}(s)$ into $x_p$, we have for $p\ge1$
\begin{align}\label{xp}
x_p &=\sum_{i=0}^{p-1}\Bigg\{
    \frac{p-i}{p}\sum_{j\in\mathcal{K}}
{\frac{\pi{\delta}\lambda_j^o\omega^{p-i}}{(1-\xi_j)^{M_j-1}}
\int_0^{r_{jo}^{-\alpha}}\Bigg(\frac{v^{p-i-{\delta}-1}}
    {(1+\omega v)^{p-i+1}}}\nonumber\\
    &-{\sum_{n=0}^{M_j-2}\binom{n+p-i}{n}
    \frac{\xi_j^{p-i+1}(1-\xi_j)^{n}v^{p-i-{\delta}-1}}
    {(1+\xi_j\omega v)^{n+p-i+1}}\Bigg)dv}\Bigg\}x_i.
\end{align}
Calculating \eqref{xp} with \cite[(3.194.1)]{Gradshteyn2007Table}, and after some algebraic manipulations, $x_p$ can be given as
\begin{equation}\label{xp_2}
  x_p =R_k^2\sum_{i=0}^{p-1}\frac{p-i}{p}
    g_{p-i}x_i,
\end{equation}
where $g_{i}\triangleq\frac{\pi{\delta}}{i-{\delta}}\sum_{j\in\mathcal{K}}\lambda_j^o
\left({ P_{j,k}M_{j,k}}\right)^{{\delta}}
\left(\frac{\phi_jP_j}{r_{jo}^{\alpha}}s\right)^i
\mathcal{Q}_{j,i}$, with $\mathcal{Q}_{j,i}$ given in \eqref{Q}.
$x_0$ in \eqref{xp_2} is calculated as
\begin{align}\label{x0}
    &x_0=\mathcal{L}_{I_o}(s)=
    \exp\Bigg(-\pi\sum_{j\in\mathcal{K}}\lambda_j^o\times\nonumber\\
    &\Bigg(  \underbrace{\int^{\infty}_{0}
    \left(1-\varpi(P_{jz})\right)dr}_{\mathcal{I}_{j1}(s)}-
    \underbrace{\int^{r_{jo}^2}_{0}
    \left(1-\varpi(P_{jz})\right)dr}_{\mathcal{I}_{j2}(s)}
    \Bigg)\Bigg).
\end{align}
$\mathcal{I}_{j1}(s)$ can be directly obtained from \cite[(8)]{Haenggi2009Stochastic}, i.e.,
\begin{equation}\label{I_j1_app}
    \mathcal{I}_{j1}(s)=
    \begin{cases}
    ~\omega^{{\delta}}C_{\alpha,M_j+1}, & \xi_j=1,\\
    \frac{\omega^{{\delta}}C_{\alpha,2}}{
    \left(1-\xi_j\right)^{M_j-1}}-
    \sum\limits_{n=0}^{M_j-2}\frac{\omega^{{\delta}}
    \xi_j^{1+{\delta}}C_{\alpha,n+2}}
    {\left(1-\xi_j\right)^{M_j-1-n}}, & \xi_j\neq1.
    \end{cases}
\end{equation}
$\mathcal{I}_{j2}(s)$ can be derived by invoking $\varpi(P_{jz})$ in \eqref{Epz}.
Specifically, when $\xi_j=1$,
    \begin{align}\label{I2_1_app}
    \mathcal{I}_{j2}(s)=
    r_{jo}^{2}
    \left[1-{\delta}\frac{
    {_2}F_1\left(M_j,M_j+\delta;M_j+\delta+1;
    -\frac{r_{jo}^{\alpha}}{\omega}\right)}
    {\left(M_j+{\delta}\right)(\omega r_{jo}^{-\alpha})^{M_j}}
\right],
    \end{align}
and when $\xi_j\neq1$,
\begin{align}\label{I2_2_app}
    &\mathcal{I}_{j2}(s)=r_{jo}^{2}
\Bigg[1-{\delta}
    \Bigg(\frac{{_2}F_1\left(1,{\delta}+1;{\delta}+2;
    -(\omega r_{jo}^{-\alpha})^{-1}\right)}
    {\left(1+{\delta}\right){(1-\xi_j)^{M_j-1}}
    (\omega r_{jo}^{-\alpha})^{-1}}-\nonumber\\
    &\sum_{n=0}^{M_j-2}\frac{{_2}F_1\left(
    n+1,n+1+{\delta};n+2+{\delta};
    -(\xi_j\omega r_{jo}^{-\alpha})^{-1}\right)}
    {\left(n+1+{\delta}\right){(1-\xi_j)^{M_j-1-n}}
    \left(\xi_j\omega r_{jo}^{-\alpha}
    \right)^{n}}\Bigg)\Bigg].
\end{align}

Having obtained a linear recurrence form for $x_p$ in \eqref{xp_2}, similar to \cite{Li2014Throughput},
$\mathbf{x}_{M_k}$ can be given by
\begin{equation}\label{xp_matrix}
  \mathbf{x}_{M_k-1}=\sum_{i=1}^{M_k-1}R_k^{2i}x_0
  \mathbf{G}_{M_k-1}^{i-1}\mathbf{g}_{M_k-1},
\end{equation}
where $\mathbf{g}_{M_k}$ and
$\mathbf{G}_{M_k}$ have the same forms as those in \cite{Li2014Throughput}.
From \cite[(39)]{Li2014Throughput}, we have $\mathbf{G}_{M_k-1}^{i-1}\mathbf{g}_{M_k-1}=\frac{1}{i!}\mathbf{\Theta}_{M_k}^i(2:M_k,1)$, where $\mathbf{\Theta}_{M_k}^i(2:M_k,1)$ represents the entries from the second to the $M_k$-th row in the first column of $\mathbf{\Theta}_{M_k}^i$, with $\mathbf{\Theta}_{M_k}$ shown in \eqref{QM}.
Then $x_p$ can be expressed as
\begin{equation}\label{xp_3}
  x_p = \sum_{i=0}^{M_k-1}R_k^{2i}x_0\frac{1}{i!}\mathbf{\Theta}_{M_k}^i(p+1,1),
\end{equation}
and consequently, $\mathcal{P}_{c,k}$ can be given by
\begin{equation}\label{pc_app_final}
  \mathcal{P}_{c,k} = \sum_{m=0}^{M_k-1}\sum_{p=0}^{m}\sum_{i=0}^{M_k-1}
  \mathbb{E}_{R_k}\left[
  x_0\frac{e^{-sN_0}R_k^{2i}\mathbf{\Theta}_{M_k}^i(p+1,1)}
  {(m-p)!i!(sN_0)^{p-m}}\right].
\end{equation}
To calculate the above expectation, we give the PDF of $R_k$ in the following lemma
\begin{lemma}\label{pdf_rk}
\textit{The PDF of $R_k$ is given by
\begin{equation}\label{F_Dk}
  f_{R_k}(x) = \begin{cases}
  ~ \frac{2\pi\lambda_k}{\mathcal{S}_k}x\exp\left(
  -\pi\Xi\left(P_kM_k\right)^{-{\delta}}x^2\right), & x\le D_k,\\
  ~0,& x>D_k.
  \end{cases}
\end{equation}}
\end{lemma}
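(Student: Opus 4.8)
The plan is to obtain $f_{R_k}$ as a \emph{conditional} density. Note that $R_k$ is precisely the distance $r_k$ from $\textsf{U}_o$ to its nearest BS in tier $k$, conditioned on the association event $\{n^*=k\}$: since the ARSP $\hat{P}_k$ is monotonically decreasing in distance, whenever tier $k$ serves $\textsf{U}_o$ it does so through its nearest BS. Hence, by the definition of a conditional density,
\begin{equation}
f_{R_k}(x) = \frac{\mathbb{P}\{n^*=k\mid r_k=x\}\,f_{r_k}(x)}{\mathbb{P}\{n^*=k\}} = \frac{\mathbb{P}\{n^*=k\mid r_k=x\}\,f_{r_k}(x)}{\mathcal{S}_k},
\end{equation}
so it remains to evaluate the numerator, where $f_{r_k}(x)=2\pi\lambda_k x e^{-\pi\lambda_k x^2}$ is the nearest-neighbour distance PDF of a homogeneous PPP \cite[Theorem 1]{Haenggi08Distance}.

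Next I would compute the conditional association probability. The event $\{n^*=k\}$ requires simultaneously $r_k\le D_k$ (otherwise $\hat{P}_k=0$) and $\hat{P}_k>\hat{P}_j$ for every $j\in\mathcal{K}\setminus k$, and given $r_k=x$ the latter events are mutually independent because the point processes $\{\Phi_j\}_{j\in\mathcal{K}}$ are independent. Therefore $f_{R_k}(x)=0$ for $x>D_k$, whereas for $x\le D_k$ I reuse \eqref{Pr_kj} from the proof of Lemma \ref{association_probability_lemma}: since $x\le D_k$ forces $\left(P_{j,k}M_{j,k}\right)^{1/\alpha}x\le D_j$, the minimum in \eqref{Pr_kj} collapses to its second argument and $\mathbb{P}\{\hat{P}_k>\hat{P}_j\mid r_k=x\}=e^{-\pi\lambda_j(P_{j,k}M_{j,k})^{\delta}x^2}$. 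Multiplying over $j\neq k$ gives
\begin{equation}
\mathbb{P}\{n^*=k\mid r_k=x\} = \exp\!\left(-\pi x^2\sum_{j\in\mathcal{K}\setminus k}\lambda_j(P_{j,k}M_{j,k})^{\delta}\right),\qquad x\le D_k.
\end{equation}

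Finally I would simplify using the definition of $\Xi$: $\sum_{j\neq k}\lambda_j(P_{j,k}M_{j,k})^{\delta}=(P_kM_k)^{-\delta}\left(\Xi-\lambda_k(P_kM_k)^{\delta}\right)=(P_kM_k)^{-\delta}\Xi-\lambda_k$, so the factor $e^{-\pi\lambda_k x^2}$ coming from $f_{r_k}$ cancels and the numerator reduces to $2\pi\lambda_k x\exp\!\left(-\pi\Xi(P_kM_k)^{-\delta}x^2\right)$; dividing by $\mathcal{S}_k$ yields \eqref{F_Dk}. As a consistency check, integrating \eqref{F_Dk} over $[0,D_k]$ and using $D_k^2=(P_kM_k)^{\delta}\tau^{-\delta}$ reproduces exactly the expression for $\mathcal{S}_k$ in \eqref{S_k}, confirming that \eqref{F_Dk} is a valid density. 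The only mildly delicate point in the whole argument is the truncation bookkeeping — namely verifying that for $x\le D_k$ no other tier's serving radius $D_j$ spoils the clean exponential form of $\mathbb{P}\{\hat{P}_k>\hat{P}_j\mid r_k=x\}$ — but this is precisely the observation already used in step (d) of \eqref{Pr_kj}, so no new difficulty arises.
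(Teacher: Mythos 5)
Your proof is correct and is essentially the argument the paper relies on: the paper's own proof simply cites \cite[Lemma 3]{Jo2012Heterogeneous} for the $x\le D_k$ case and notes that \eqref{arsp} truncates the support, while you carry out that Bayes-rule computation explicitly, reusing \eqref{Pr_kj} and verifying that the truncation does not disturb the exponential form. The algebra (cancellation of $e^{-\pi\lambda_k x^2}$ via the definition of $\Xi$) and the normalization check against \eqref{S_k} are both right, so this is a valid self-contained version of the paper's proof rather than a different route.
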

\begin{proof}
    The result for $x\le D_k$ is obtained from  \cite[Lemma 3]{Jo2012Heterogeneous}, while that for $x> D_k$ is an immediate consequence of \eqref{arsp}.
\end{proof}

Averaging over $R_k$ using \eqref{F_Dk} completes the proof.

\subsection{Proof of Corollary \ref{pck_corollary}}
\label{pck_corollary_proof}

For the interference-limited case where $N_0=0$, \eqref{pc_app_final} can be simplified as
\begin{equation}\label{pc_app_inter_final}
  \mathcal{P}^{inter}_{c,k} = \sum_{m=0}^{M_k-1}\sum_{i=0}^{M_k-1}
  \mathbb{E}_{R_k}\left[
 \frac{1}{i!}
  { x_0 R_k^{2i}}
  \mathbf{\Theta}_{M_k}^i(m+1,1)\right],
\end{equation}
which can be alternatively expressed in the following form using the $L_1$ induced matrix norm,
\begin{equation}\label{pc_app_inter_final2}
  \mathcal{P}^{inter}_{c,k} = \sum_{i=0}^{M_k-1}
  \mathbb{E}_{R_k}\left[
  \left\|\frac{1}{i!}
  {x_0 R_k^{2i}}
  \mathbf{\Theta}_{M_k}^i\right\|_1\right].
\end{equation}
Averaging over $R_k$ using \eqref{F_Dk} completes the proof.

\subsection{Proof of Properties \ref{pc_property2}-\ref{pc_property5}}
\label{proof_property_2_5}
\textit{i. Proof of Property \ref{pc_property2}:}
    For the case $\{M_j\}=M$, $\{\phi_j\}=\phi$ and $\lambda_u\gg \lambda_j$, $\forall j\in\mathcal{K}$, $\mathcal{A}_j\rightarrow1$ as $\tau\rightarrow0$;
    $\Upsilon_k$ and $\mathbf{\Theta}_{M_k}$ can be re-expressed as  $\Upsilon_k=\frac{\Xi}{(P_kM)^{\delta}}\tilde \Upsilon_0$, and $\mathbf{\Theta}_{M_k}=\frac{\Xi}{(P_kM)^{\delta}}\tilde {\mathbf{\Theta}}_{M}$, where both $\tilde \Upsilon_0$ and $\tilde {\mathbf{\Theta}}_{M}$ are independent of $P_j$, $\lambda_j$ and $k$.
    Since $D_k\rightarrow\infty$ as $\tau\rightarrow0$, by omitting the
term $\sum_{l=0}^m\frac{\pi^{l}e^{-\pi\Upsilon_k D_k^2}}
     {l!D_k^{-2l}\Upsilon^{-l}_k}$ from \eqref{pc_inter} and substituting in $\Upsilon_k$, $\mathbf{\Theta}_{M_k}$ along with $\mathcal{S}_k^{\epsilon=0}
    =\frac{\lambda_k(P_kM)^{\delta}}{\Xi}$, we obtain
\begin{align}\label{pc_epsilon_0}
     \mathcal{P}^{int,\epsilon=0}_{c,k}&=
     \sum_{m=0}^{M-1}
     \frac{1}
     {\pi^{m}\tilde\Upsilon^{m+1}_0}
     \left\|\tilde{\mathbf{\Theta}}_{M}^m\right\|_1,
     ~~\forall k\in\mathcal{K}
\end{align}
which is obviously independent of $P_j$, $\lambda_j$ and $k$.

\textit{ii. Proof of Property \ref{pc_property3}:}
To complete the proof, we first give the following lemma.
\begin{lemma}\label{pc_bound_lemma}
\textit{For the interference-limited HCN, the UCP of a typical UE associated with tier $k$ satisfies
\begin{equation}\label{pc_bound_def}
  \mathcal{P}^{int,B}_{c,k}(\beta_t)\le\mathcal{P}^{int}_{c,k}
  \le\mathcal{P}^{int,B}_{c,k}(\varphi_{k}\beta_t),
\end{equation}
where $\varphi_{k}\triangleq (M_k!)^{-\frac{1}{M_k}}$ and
\begin{align}\label{pc_bound_inter}
  \mathcal{P}^{int,B}_{c,k}(\beta) =
  \frac{\lambda_k}{\mathcal{S}_k}\sum_{m=1}^{M_k}
  \binom{M_k}{m}\frac{(-1)^{m+1}}
  {\hat \Upsilon_{k,m\beta}}
  \left(1-e^{-\pi\hat\Upsilon_{k,m\beta}D_k^2}\right).
\end{align}
with the value of $\hat \Upsilon_{k,m\beta}$ equal to that of $\Upsilon_{k}$ at $\beta_t=m\beta$.
}
\end{lemma}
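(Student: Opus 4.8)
The plan is to sidestep the exact complementary CDF of the gamma power $\|\mathbf{h}_b\|^2\sim\Gamma(M_k,1)$ that enters \eqref{pc_k2}: instead of using $\mathbb{P}\{\|\mathbf{h}_b\|^2>g\}=e^{-g}\sum_{m=0}^{M_k-1}g^m/m!$ — whose averaging forces the $p$-order derivatives of $\mathcal{L}_{I_o}$ and hence the Toeplitz matrix $\mathbf{\Theta}_{M_k}$ of Theorem \ref{pck_theorem} — I would sandwich this CCDF between two pure exponential sums. Averaging those only requires $\mathcal{L}_{I_o}$ itself, evaluated at integer-scaled arguments, which is exactly the $p=i=0$ ingredient already computed in the proof of Theorem \ref{pck_theorem}.

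Concretely, I would first invoke the well-known Alzer-type bound for a normalized gamma variable $G\sim\Gamma(N,1)$: for every $g\ge0$,
\[
1-\left(1-e^{-g}\right)^{N}\le\mathbb{P}\{G>g\}\le 1-\left(1-e^{-\varphi g}\right)^{N},\qquad \varphi\triangleq(N!)^{-1/N}\le1 .
\]
Setting $N=M_k$, $g=sI_o$ with $s=R_k^{\alpha}\beta_t/(\phi_kP_k)$, and using that $t\mapsto1-(1-e^{-t})^{M_k}$ is increasing (so the pointwise bounds survive $\mathbb{E}_{I_o}\mathbb{E}_{R_k}$), $\mathcal{P}^{int}_{c,k}$ gets sandwiched between $\mathbb{E}_{R_k}\mathbb{E}_{I_o}[1-(1-e^{-csI_o})^{M_k}]$ for $c=1$ (lower) and for $c=\varphi_k=(M_k!)^{-1/M_k}$ (upper). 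Expanding by the binomial theorem, $1-(1-e^{-cg})^{M_k}=\sum_{m=1}^{M_k}\binom{M_k}{m}(-1)^{m+1}e^{-mcg}$, so the inner expectation over $I_o$ collapses term by term to $\sum_{m=1}^{M_k}\binom{M_k}{m}(-1)^{m+1}\mathcal{L}_{I_o}(mcs)$.

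It then remains to average $\mathcal{L}_{I_o}(mcs)$ over $R_k$. Here I would reuse the closed form of $x_0=\mathcal{L}_{I_o}(s)$ from the proof of Theorem \ref{pck_theorem} (expression \eqref{x0} with \eqref{I_j1_app} and $N_0=0$): since $s\propto R_k^{\alpha}$, one gets $\mathcal{L}_{I_o}(mcs)=\exp\big(-\pi R_k^{2}[\hat\Upsilon_{k,mc\beta_t}-\Xi(P_kM_k)^{-\delta}]\big)$, where $\hat\Upsilon_{k,mc\beta_t}$ is precisely $\Upsilon_k$ with $\beta_t$ replaced by $mc\beta_t$ — this is the same identification that yields the kernel $e^{-\pi\Upsilon_k x}$ in \eqref{pc_inter}, the $\Xi(P_kM_k)^{-\delta}$ piece being exactly the $1-\mathcal{A}_j$ contribution absorbed into $\Upsilon_k$. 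Multiplying by $f_{R_k}$ in \eqref{F_Dk} cancels that piece, leaving the elementary integral
\[
\mathbb{E}_{R_k}\!\left[\mathcal{L}_{I_o}(mcs)\right]=\frac{2\pi\lambda_k}{\mathcal{S}_k}\int_0^{D_k}\!x\,e^{-\pi\hat\Upsilon_{k,mc\beta_t}x^{2}}dx=\frac{\lambda_k}{\mathcal{S}_k\hat\Upsilon_{k,mc\beta_t}}\big(1-e^{-\pi\hat\Upsilon_{k,mc\beta_t}D_k^{2}}\big).
\]
Summing over $m$ and writing $\beta=c\beta_t$ reproduces $\mathcal{P}^{int,B}_{c,k}(\beta)$ of \eqref{pc_bound_inter}, so with $c=1$ and $c=\varphi_k$ the sandwich becomes $\mathcal{P}^{int,B}_{c,k}(\beta_t)\le\mathcal{P}^{int}_{c,k}\le\mathcal{P}^{int,B}_{c,k}(\varphi_k\beta_t)$, as claimed.

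I expect the main obstacle to be this last step: cleanly justifying that rescaling the Laplace argument $s\mapsto mcs$ acts on the final formula exactly as $\beta_t\mapsto mc\beta_t$ acts on $\Upsilon_k$, and that $f_{R_k}$ absorbs the $\Xi(P_kM_k)^{-\delta}$ term in the exponent — essentially retracing the bookkeeping of the proof of Theorem \ref{pck_theorem}, but only for the single term $p=i=0$, hence much lighter than that proof. Everything else (Alzer's inequality, the binomial identity, and the Gaussian-type radial integral over $[0,D_k]$) is routine; the only further point of care is that $\varphi_k\le1$, which is what makes the $c=\varphi_k$ branch an upper rather than a lower bound.
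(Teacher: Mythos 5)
Your proposal is correct and follows essentially the same route as the paper: Alzer's inequality sandwiches the Gamma$(M_k,1)$ CCDF between $1-(1-e^{-x})^{M_k}$ and $1-(1-e^{-\varphi_k x})^{M_k}$, the binomial expansion reduces everything to $\mathcal{L}_{I_o}(mcs)$, and averaging over $R_k$ with the PDF in \eqref{F_Dk} yields \eqref{pc_bound_inter}; your bookkeeping identifying the rescaling $s\mapsto mcs$ with $\beta_t\mapsto mc\beta_t$ in $\Upsilon_k$ and the absorption of the $\Xi(P_kM_k)^{-\delta}$ term is exactly what the paper leaves implicit. The only quibble is your parenthetical that $t\mapsto 1-(1-e^{-t})^{M_k}$ is ``increasing'' (it is decreasing), but this is immaterial since a pointwise inequality survives expectation without any monotonicity argument.
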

\begin{proof}
    Recalling \eqref{pc_k1}, since $\|\mathbf{h}_{b}\|^2\sim \Gamma(M_k,1)$, we have  $\mathbb{P}\left\{\|\mathbf{h}_{b}\|^2\geq x\right\}=
    1-\int_0^x\frac{e^{-v}v^{M_k-1}}{(M_k-1)!}dv$, which can be rewritten as the form of $1-\frac{1}{\Gamma(1+1/t)}\int_0^z e^{-v^t}dv$ with $t=1/M_k$ and $z=x^{M_k}$.
    Then according to Alzer's inequality \cite{Alzer1997Mathematics}, we obtain the following relationship
    \begin{equation}\label{ccdf_bound}
 1- \left(1-e^{-x}\right)^{M_k}\le \mathbb{P}\left\{\|\mathbf{h}_{b}\|^2\geq x\right\}
 \le 1- \left(1-e^{-\varphi_kx}\right)^{M_k}.
\end{equation}
Substituting \eqref{ccdf_bound} into $\mathcal{P}^{int}_{c,k}= \mathbb{E}_{R_k}\mathbb{E}_{I_o} \left[ \mathbb{P}\left\{\|\mathbf{h}_{b}\|^2\geq {s}I_o\right\}\right]$, we can finally obtain \eqref{pc_bound_def} and \eqref{pc_bound_inter}.
\end{proof}
As $\tau\rightarrow\infty$, we have $\mathcal{A}_k\rightarrow 0$ for $k\in\mathcal{K}$.
 Accordingly, we obtain $\Upsilon_k=\frac{\Xi}{(P_{k}M_k)^{\delta}}$, which becomes independent of $\beta_t$, and so does $\hat \Upsilon_{k,m\beta}$.
This implies $\mathcal{P}^{int,B}_{c,k}(\beta_t)
=\mathcal{P}^{int,B}_{c,k}(\varphi_{k}\beta_t)
=\mathcal{P}^{int}_{c,k}$.
Substituting \eqref{S_k} into \eqref{pc_bound_inter}, $\mathcal{P}^{int}_{c,k}$ can be finally reduced to $\sum_{m=1}^{M_k}\binom{M_k}{m}(-1)^{m+1}=1$, which completes the proof.

\textit{iii. Proof of Property \ref{pc_property4}:}
Considering ${P_{j,1}}\rightarrow 0$, $\forall j\neq1$, we have $\mathcal{S}_j\rightarrow 0$ and $\mathcal{A}_j\rightarrow 0$, and accordingly $\Upsilon_1\propto \lambda_1\mathcal{A}_1$, $\Upsilon_j\propto \lambda_1\mathcal{A}_1P_{1,j}^{\delta}$, and $\left\|\mathbf{\Theta}_{M_1}^m\right\|_1
\propto(\lambda_1\mathcal{A}_1)^m$, $\left\|\mathbf{\Theta}_{M_j}^m\right\|_1
\propto(\lambda_1\mathcal{A}_1P_{1,j}^{\delta})^m$.
We see that, both $D_1$ and $\Upsilon_j$ goes to infinite as ${P_{j,1}}\rightarrow 0$, then by omitting the exponential term from \eqref{pc_inter}, and combined with the above observations, we obtain $\mathcal{P}^{int}_{c,1}\propto \eta_1\triangleq\frac{1}{\mathcal{S}_1 \mathcal{A}_1}$ and $\mathcal{P}^{int}_{c,j}\propto\eta_j
    \triangleq\frac{\lambda_{j,1}{ P^{{2}/{\alpha}}_{j,1}}}{\mathcal{S}_j \mathcal{A}_1}$.
    From \eqref{S_k} and \eqref{A_k}, we can readily see that both $\frac{1}{\mathcal{S}_1\mathcal{A}_1}$ and $\frac{\lambda_{j,1}}{\mathcal{S}_j \mathcal{A}_1}$ monotonically increase on $\tau$ and  $\lambda_l$, $\forall l\neq1$, while decrease on $\lambda_u$, which completes the proof.

\textit{iv. Proof of Property \ref{pc_property5}:}
    Similar to the proof for Property \ref{pc_property3},
    we have $\mathcal{P}^{int}_{c,1}\propto\eta_1$ and $\mathcal{P}^{int}_{c,j}\propto\eta_j$ as $ P_{j,1}\ll 1$.
    Since $P_1$ increases $\mathcal{S}_1$, $\mathcal{A}_1$ and $P_1\mathcal{S}_j$, we easily see that both $\eta_1$ and $\eta_j$ decrease on $P_1$.
    As $P_1\rightarrow \infty$, we obtain $\mathcal{S}_1\rightarrow 1$, $\mathcal{A}_1\rightarrow 1-e^{-\frac{\lambda_u}{\lambda_1}}$, and $\eta_j\rightarrow \frac{\lambda_1}{\mathcal{A}_1}{ M^{-{2}/{\alpha}}_{j,1}}$, and it is clear that $\mathcal{P}^{int}_{c,k}$, $\forall k\in\mathcal{K}$, becomes independent of $P_1$, which completes the proof.

\subsection{Proof of Theorem \ref{psk_theorem}}
\label{psk_theorem_proof}
Applying the PGFL over PPP along with the Jensen's inequality yields
\begin{align}\label{ps_k_app1}
    \mathcal{P}_{s,k} &=
    \mathbb{E}_{\Phi_1}\cdots
    \mathbb{E}_{\Phi_K}\Big[\exp\Big(-2\pi\lambda_e\times\nonumber\\
    &\int_0^{\infty}\mathbb{P}\left\{\mathbf{SINR}_{e,k}
    \ge\beta_{e}|\Phi_1,\cdots,\Phi_K\right\}rdr\Big)\Big]\nonumber\\
    &\ge \exp\left(-2\pi\lambda_e
    \int_0^{\infty}\mathbb{P}\left\{\mathbf{SINR}_{e,k}
    \ge\beta_{e}\right\}rdr\right).
\end{align}

Let $I_e = I_{be}+\sum_{j\in\mathcal{K}}I_{je}$ and $\kappa=\frac{ r_{be}^{\alpha}\beta_{e}}{\phi_kP_k}$, and we can calculate $\mathbb{P}\{\mathbf{SINR}_{e,k}>\beta_{e}\}$ as follows
\begin{align}\label{ps_k_app2}
   \mathbb{P}&\{\mathbf{SINR}_{e,k}>\beta_{e}\}=
   \mathbb{P}\left\{\left|\mathbf{h}_{be}^{\mathrm{T}}
   \mathbf{w}_{b}\right|^2> \kappa(I_e+N_0)\right\}\nonumber\\
   &\qquad\stackrel{\mathrm{(h)}}=\mathbb{E}_{I_e}
   \left[e^{-\kappa(I_e+N_0)}\right]
    = e^{-\kappa N_0}\mathcal{L}_{I_e}(\kappa),
\end{align}
where (h) holds because $U\triangleq \left|\mathbf{h}_{be}^{\mathrm{T}}
   \mathbf{w}_{b}\right|^2\sim\mathrm{Exp}(1)$ is independent of $I_e$.
   Note that, $U$ is also independent of $V\triangleq \|\mathbf{h}_{be}^{\mathrm{T}}
\mathbf{W}_{b}\|^2\sim\Gamma(M_k-1,1)$ due to the orthogonality of $\mathbf{w}_{b}$ and $\mathbf{W}_{b}$.

Similar to \eqref{La_Io_app}, the Laplace transform of $I_e$ can be expressed as
\begin{equation}\label{La_Ie_app}
    \mathcal{L}_{I_e}(\kappa) = \mathcal{L}_{I_{be}}(\kappa)
    \prod_{j\in\mathcal{K}}
    \mathcal{L}_{I_{je}}(\kappa).
\end{equation}
We first calculate $\mathcal{L}_{I_{be}}(\kappa)$ as
\begin{align}\label{La_Ibe}
    \mathcal{L}_{I_{be}}(\kappa) &= \mathbb{E}_{I_{be}}
    \left[e^{-\kappa I_{be}}\right]= \int_0^{\infty}e^{-\xi_k\phi_kP_kr_{be}^{-\alpha}\kappa v}f_V(v)dv\nonumber\\
    &= \left(
    1+\xi_k\phi_kP_kr_{be}^{-\alpha}\kappa \right)^{1-M_k},
\end{align}
where the last equality is obtained by invoking $f_V(v)=\frac{v^{M_k-2}e^{-v}}{\Gamma(M_k-1)}$ and using \cite[(3.326.2)]{Gradshteyn2007Table}.
We then obtain $\mathcal{L}_{I_{je}}(\kappa )$ from \cite[(8)]{Haenggi2009Stochastic}, which is given by
\begin{equation}\label{La_Ije}
  \mathcal{L}_{I_{je}}(\kappa ) = \exp\left(-\pi\lambda_j^oC_{\alpha,M_j}(\xi_j\phi_jP_j\kappa )^
  {{\delta}}\right).
\end{equation}

Substituting \eqref{La_Ibe} and \eqref{La_Ije} into \eqref{ps_k_app2} yields
\begin{align}\label{ps_k_app3}
   \mathbb{P}\{\mathbf{SINR}_{e,k}\ge\beta_{e}\}
   & = \frac{e^{-\kappa N_0}e^{-\pi\sum_{j\in\mathcal{K}}
    \mathcal{A}_j\lambda_jC_{\alpha,M_j}
    (\xi_j\phi_jP_j\kappa )^
  {{\delta}}}}{\left(1+\xi_k\beta_{e}\right)^{M_k-1}}.
\end{align}

Plugging \eqref{ps_k_app3} with $\kappa =\frac{ r_{be}^{\alpha}\beta_{e}}{\phi_kP_k}$ into \eqref{ps_k_app1}, we obtain the lower bound $\mathcal{P}_{s,k}^L$ as shown in \eqref{ps_k_lower}.

Next, we derive the upper bound $\mathcal{P}_{s,k}^U$ by only considering the nearest Eve to the serving BS.
Given a serving BS $\textsf{B}_{b,k}$ and the nearest Eve $\textsf{E}_{e}$,
$\mathcal{P}_{s,k}^U$ can be calculated by
\begin{equation}\label{ps_k_upper_app}
    \mathcal{P}_{s,k}^U = \int_0^{\infty}
\mathbb{P}\{\mathbf{SINR}_{e,k}<\beta_{e}\}f_{r_{be}}(r)dr,
\end{equation}
where $f_{r_{be}}(r)=2\pi\lambda_e re^{-\pi\lambda_er^2}$ \cite[Theorem 1]{Haenggi08Distance} and $\mathbb{P}\{\mathbf{SINR}_{e,k}<\beta_{e}\}
=1-\mathbb{P}\{\mathbf{SINR}_{e,k}\ge\beta_{e}\}$ can be directly obtained from \eqref{ps_k_app3}.
Calculating the integral yields the result as shown in \eqref{ps_k_upper}.

\subsection{Proof of Properties \ref{ps_property6}-\ref{ps_property8}}
\label{proof_property_6_8}
\textit{i. Proof of Property \ref{ps_property6}:}
     We obtain the monotonicity of $\xi_k$ and $\psi_k$ on $\lambda_e$, $\tau$, and $\phi_k$ from \eqref{ps_k_bound_2}: 1) Both $\xi_k$ and $\psi_k$ are independent of $\lambda_e$;
    2) $\psi_k$ monotonically decreases on $\tau$ and $\phi_l$, $\forall l\in\mathcal{K}\setminus k$, while $\xi_k$ is independent of $\tau$ and $\phi_l$;
    3) Both $\xi_k$ and $\psi_k$ monotonically decrease on $\phi_k$;
    4) $\psi_k$ monotonically increases on $\lambda_u$, while $\xi_k$ is independent of $\lambda_u$.
    Combining these results directly completes the proof.

\textit{ii. Proof of Property \ref{ps_property7}:}
    Considering ${M_{j,k}}\rightarrow 0$, we have $\left(1+\xi_k\beta_{e}\right)^{1-M_k}\approx e^{-\left(\phi_k^{-1}-1\right)\beta_{e}}$ and $\mathcal{A}_j\rightarrow 0$, then we obtain $1-\mathcal{P}_{s,k}^{int,o} \propto \frac{\lambda_e(M_k-1)}{\lambda_k\mathcal{A}_k C_{\alpha, M_k}}$.
    We can prove that $\frac{M_k-1}{\lambda_k\mathcal{A}_k C_{\alpha, M_k}}$ monotonically decreases on $M_k$ while increases on $\lambda_k$, which completes the proof.

\textit{iii. Proof of Property \ref{ps_property8}:}
    For an extremely large $P_k$, we have $1-\mathcal{P}^{int,o}_{s,k}\propto\chi_k\triangleq
    \frac{\lambda_e}{\lambda_k \mathcal{A}_k}$ and $1-\mathcal{P}^{int,o}_{s,j}\propto\chi_j\triangleq
    \frac{\lambda_e{P^{{2}/{\alpha}}_{j,k}}}{\lambda_j \mathcal{A}_k}$, $\forall j\neq k$.
    We can easily prove that both $\chi_k$ and $\chi_j$ decrease on $P_k$, i.e., each $\mathcal{P}^{int,o}_{s,j}$ increases on $P_k$.
    As $P_k\rightarrow \infty$, we have $\mathcal{A}_k\rightarrow 1-e^{-\frac{\lambda_u}{\lambda_k}}$, such that $\mathcal{P}^{int,o}_{s,k}$ tends to be constant.
    Besides, we have $\lim_{P_k\rightarrow\infty}\chi_j=0$, which yields $\lim_{P_k\rightarrow\infty}\mathcal{P}^{int,o}_{s,j}=1$.

\subsection{Proof of Propositions \ref{beta_e_proposition1} and \ref{beta_e_proposition2}}
\label{proof_proposition_1_2}
\textit{i. Proof of Proposition \ref{beta_e_proposition1}:}
    Substituting $\alpha=4$ and $M_k=2$ into \eqref{ps_k_bound_2} yields $\mathcal{P}_{s,k}^{int,o}(\beta_{e,k})
    =1-\frac{\lambda_e(\xi_k\beta_{e,k})^{-1/2}}
    {\psi_k(1+\xi_k\beta_{e,k})}$.
    Let $x\triangleq(\xi_k\beta_{r,k})
    ^{\frac{1}{2}}$, then we obtain a cubic equation $x^3+x-\frac{\lambda_e}{\psi_k(1-\epsilon)}=0$ from $\mathcal{P}_{s,k}^{int,o}(\beta_{e,k})=\epsilon$.
    Solving it with Cardano's formula \cite{Dunham1990Cardano} completes the proof.

\textit{ii. Proof of Proposition \ref{beta_e_proposition2}:}
    Since $\lim_{M\rightarrow\infty}\left(1+\frac{x}{M}\right)^{-M}= e^{-x}$, we rewrite \eqref{ps_k_bound_2} as $\mathcal{P}_{s,k}^{int,o}(\beta_{e,k})
    =1-\frac{\lambda_e}{\psi_k\beta_{e,k}^{{\delta}}}
    e^{-\frac{1-\phi_k}{\phi_k}\beta_{e,k}}$.
    Let $\frac{1-\phi_k}{\phi_k}\beta_{e,k}\rightarrow x$ and $e^{\frac{\alpha}{2}x}\rightarrow y$, and we obtain $y^y=e^{\theta}$ from $\mathcal{P}_{s,k}^{int,o}(\beta_{e,k})=\epsilon$ with $\theta\triangleq \frac{\alpha}{2}
  \frac{1-\phi_k}{\phi_k}\left(\frac{ (1-\epsilon)\psi_k}{\lambda_e}\right)^{-\frac{\alpha}{2}}$.
  The solution $y$ is given by $y=\frac{\theta}{\mathcal{W}(\theta)}$, which yields the unique root $\beta^{\star}_{e,k}$.

\end{document}